\documentclass[journal]{IEEEtran}

\usepackage{graphicx}
\usepackage{subfigure}
\usepackage{amsmath,amsthm,amsfonts,amssymb}
\usepackage{cite}
\usepackage{bm}
\usepackage{bbm}
\usepackage{url}
\usepackage{array}
\usepackage{multirow}
\usepackage{booktabs}
\usepackage[table,xcdraw]{xcolor}
\usepackage{soul}
\usepackage{enumitem}
\usepackage{verbatim}
\usepackage{mathrsfs}
\usepackage{algorithm}
\usepackage[noend]{algpseudocode}
\usepackage[english]{babel}

\theoremstyle{plain}
\newtheorem{thm}{Theorem}
\newtheorem{lem}[thm]{Lemma}
\newtheorem{prop}[thm]{Proposition}
\newtheorem{cor}[thm]{Corollary}

\newtheorem{rem}{Remark}

\begin{document}

\addtolength{\textheight}{0.04in} % longer
\addtolength{\voffset}{-0.02in}

\addtolength{\textwidth}{0.06in} % wider
\addtolength{\hoffset}{-0.03in}

\title{Stepped Frequency Division Multiplexing: A Jump-Free Continuous-Time AFDM Waveform}

\author{Yewen Cao and Yulin Shao
\thanks{The authors are with the Department of Electrical and Computer Engineering, The University of Hong Kong, Hong Kong, China (e-mails: caoyewen3@gmail.com, ylshao@hku.hk).}
}

\maketitle

\begin{abstract}
Affine frequency division multiplexing (AFDM) has emerged as a promising modulation scheme for doubly selective channels, but its canonical continuous-time realization, referred to herein as piecewise continuous AFDM (PC-AFDM), has been observed to exhibit high out-of-band emission (OOBE) whose mechanism has not been analytically characterized.
This paper shows that the underlying cause is frequency wrapping, which introduces internal envelope jumps between AFDM sampling instants and generates a high-frequency spectral tail distinct from ordinary block truncation.
To eliminate these discontinuities without altering the inverse discrete affine Fourier transform (IDAFT) output sequence, we propose stepped frequency division multiplexing (SFDM). In SFDM, the instantaneous frequency is kept constant at the midpoint of the wrapped chirp within each sampling interval, while the phase is continuously accumulated across interval boundaries. We prove that, under continuous phase accumulation and without additional phase correction, the midpoint choice is the unique sample-preserving choice for arbitrary chirp-rate parameter. The resulting waveform is continuous within each AFDM block, reduces OOBE, and preserves the standard AFDM modulation matrix, guard-interval structure, and receiver processing. Moreover, under fractional-delay propagation, SFDM mitigates the receiver sensitivity that arises when delayed sampling points fall near wrapping-induced discontinuities in PC-AFDM.
Numerical results verify the theoretical tail coefficients, demonstrate OOBE reduction, and show improved receiver robustness in the high-percentile and worst-case regimes. These findings establish SFDM as a spectrally cleaner and more reliable physical layer for AFDM systems.
\end{abstract}

\begin{IEEEkeywords}
SFDM, AFDM, continuous-time waveform, out-of-band emission, frequency wrapping. 
\end{IEEEkeywords}

\section{Introduction}
\label{sec:introduction}

Doubly selective channels have motivated renewed interest in modulation schemes that spread data symbols over time, frequency, or delay-Doppler domains \cite{liu2004orthogonal,Matz2011TimeVaryingChannels,shao2021federated,Wu2011OversampledOFDM}. Representative examples include orthogonal time-frequency space (OTFS) modulation \cite{Hadani2017OTFS,Raviteja2018OTFS}, orthogonal chirp division multiplexing (OCDM) \cite{Ouyang2016OCDM}, Doppler resilient orthogonal signal division multiplexing (D-OSDM) \cite{Ebihara2016OSDM}, and affine frequency division multiplexing (AFDM) \cite{Bemani2021AFDM,Bemani2023AFDM}. Among them, AFDM uses the discrete affine Fourier transform (DAFT) to generate chirp basis functions whose parameters can be adjusted according to the delay-Doppler spread of doubly selective channels. Recent studies have further investigated AFDM for integrated sensing and communications (ISAC) \cite{10769778,cao2025agile}, channel estimation \cite{yang2026delay}, radar parameter estimation \cite{Ranasinghe2025AFDMISAC}, and ambiguity function analysis.

A shared thread running through these works is that they treat AFDM as a discrete modulation transform: an AFDM block is generated by the inverse DAFT (IDAFT), which produces a sequence of $N$ Nyquist-rate samples.
When the channel input-output relation is purely discrete, this view is adequate. However, once the modulation block leaves the digital baseband and becomes a physical continuous-time waveform, the IDAFT samples no longer tell the full story. Infinitely many continuous-time waveforms can interpolate the same $N$ samples, and these realizations differ in their inter-sample trajectories, their spectral containment, and, crucially, in how they respond to fractional-delay propagation. In other words, equality at the AFDM sampling instants is insufficient to characterize the transmitted waveform.

This leads to the following question: \emph{Given a discrete AFDM block defined via the IDAFT, how should one realize it as a continuous-time waveform while retaining the exact IDAFT output sequence, and what are the spectral and receiver consequences of different choices?}

The canonical continuous-time AFDM waveform, denoted herein as piecewise continuous AFDM (PC-AFDM), is formulated by the proposers of AFDM in \cite{Bemani2024ISAC}. In this construction, the instantaneous frequency of each chirp subcarrier is folded into the nominal bandwidth through frequency wrapping, and a phase correction is applied so that the waveform values at the sampling instants coincide with the IDAFT output sequence. Variants of this wrapped chirp realization have been adopted in recent AFDM studies on sensing waveform analysis, ambiguity function analysis, matched filtering, and continuous-time channel modeling \cite{Bemani2024ISAC,11173628,11489295}. However, the ambiguity function analysis in \cite{11173628} observed unexpectedly high out-of-band emission (OOBE) from continuous-time AFDM signals. The mechanism behind this behavior has not been analytically characterized. This motivates a systematic examination of the wrapped chirp realization itself, before external spectral-shaping measures are applied.

Our analysis shows that the observed OOBE is caused by internal discontinuities in PC-AFDM. The frequency wrapping operation partitions the time axis into segments with constant wrap count. At each wrapping instant, the phase correction that preserves the discrete AFDM block can introduce an abrupt jump in the complex envelope between sampling instants. These jumps disappear only for a discrete set of chirp-rate parameters. In the generic case, they create a high-frequency spectral tail beyond the nominal occupied bandwidth. This tail is not the ordinary endpoint truncation effect of a finite block. It is an additional contribution caused by wrapping-induced jumps inside the block.

This issue is different from conventional spectral shaping \cite{diez2019generalized}. Pulse shaping, windowing, and filtering are standard tools for reducing spectral leakage \cite{Brandes2006CancellationCarriers,vanDeBeek2009NContinuousOFDM,shao2024theory,Mahmoud2008AST,You2014OOBE}, but they operate by modifying the transmitted waveform or the observed samples. A continuous window applied to PC-AFDM can only scale an internal jump by the window value at the jump location. It cannot eliminate a nonzero jump unless the window vanishes there. This observation motivates a different construction that removes the internal jumps during waveform realization, while still preserving the discrete AFDM block.

To this end, this paper proposes stepped frequency division multiplexing (SFDM), a continuous-time AFDM realization that preserves the IDAFT output sequence and maintains a continuous complex envelope. The term ``stepped'' refers to the following construction. Within each sampling interval, the instantaneous frequency is held constant at the midpoint value of the wrapped affine chirp. The phase is accumulated continuously across adjacent intervals. This construction alters only the waveform between sampling instants. Its values on the sampling instants remain exactly those of the IDAFT output sequence. Within the class in which each sampling interval uses one representative frequency and the phase is accumulated continuously, we prove that the midpoint choice is the unique one that preserves the IDAFT output sequence uniformly over the chirp-rate parameter. SFDM is not the only possible continuous-time AFDM realization. It is, however, a principled candidate that retains the wrapped instantaneous frequency interpretation while eliminating the internal envelope jumps of PC-AFDM.
% 这句话想表达意思不明。
% 这句话先说："SFDM不唯一"，后面原理上应该跟着"但是他怎么怎么好"。但是，现在你后面跟的是陈述句，重复陈述了一下SFDM的性质
% 原本我写的应该是 "SFDM不唯一"+"但是他怎么怎么好"这个结构，强调的是它的好处。
% 你现在好像就是在陈述"SFDM不唯一"，and 重复陈述SFDM的性质？前者贬低SFDM，后者显得重复

The choice of continuous-time realization also affects the sampled receive signal \cite{Wu2011OversampledOFDM}. When every propagation delay is an integer multiple of the sampling period, PC-AFDM and SFDM lead to the same sampled channel matrix. The receiver observes only the sampling instants where the two waveforms coincide. The situation changes under fractional delays, which are common in wideband and sensing receivers. In this case, the receiver samples the transmitted waveform between grid points, and different continuous-time realizations generally lead to different sampled input-output relations \cite{Laakso1996FractionalDelay,11489295,221081}. SFDM therefore affects not only out-of-band leakage, but also the sampled receive signal under fractional delay propagation.

Building on this framework, the main contributions of this paper are summarized as follows.
\begin{enumerate}[leftmargin=0.45cm]
\item We reveal that the widely used continuous-time AFDM realization, PC-AFDM, exhibits internal envelope discontinuities. These discontinuities occur at frequency-wrapping boundaries for generic chirp rates. They have been overlooked in prior AFDM studies because they fall between Nyquist sampling instants and therefore leave the standard IDAFT output samples unchanged. We derive the exact temporal locations of these jumps, quantify their magnitudes in closed form, and establish the parametric condition under which they vanish.

\item We resolve the previously unexplained high OOBE observed in continuous-time AFDM by tracing it directly to the internal envelope discontinuities identified in this work. Through exact subcarrier spectra and a high-frequency asymptotic expansion, we prove that these jumps generate an additional spectral tail that decays inversely with the square of frequency and persists far beyond the nominal occupied bandwidth. The analysis thus uncovers the root cause of the anomalous OOBE and provides a quantitative tool for predicting and controlling out-of-band leakage in wrapped-chirp realizations.

\item We propose SFDM, a continuous-time AFDM realization that is structurally free of internal jumps. The key idea is to hold the instantaneous frequency constant at the midpoint value of the wrapped affine chirp within each sampling interval while accumulating phase continuously across interval boundaries. We prove that this midpoint rule is the unique sample-preserving choice uniformly over the chirp-rate parameter within the class of realizations with one constant instantaneous frequency over each sampling interval. SFDM thereby delivers three advantages over PC-AFDM.
\begin{itemize}
    \item First, its complex envelope is continuous, directly eliminating the internal-jump spectral tail without any sample distortion.
    \item Second, because the IDAFT samples are exactly preserved, SFDM requires no change to the discrete modulation matrix, guard-interval structure, or standard receiver processing.
    \item Third, under fractional-delay propagation, SFDM's smooth inter-sample trajectory suppresses the high-percentile and worst-case error vector magnitude (EVM) degradation that PC-AFDM suffers, yielding more robust receiver performance.
\end{itemize}

\item We clarify the relation between waveform realization and conventional postprocessing. Windowing reduces OOBE mainly by smoothing endpoint transitions at the cost of sample distortion, whereas SFDM eliminates internal wrapping-induced jumps with zero perturbation to the IDAFT samples. This establishes SFDM not as an alternative to windowing, but as a new, complementary degree of freedom in AFDM waveform design, compatible with all conventional post-processing methods.
\end{enumerate}
\section{Continuous-Time Realizations of AFDM}
\label{sec:ct_realizations}

To study the continuous-time behavior of AFDM, we start from its discrete model and analyze PC-AFDM as a reference realization. We show that while PC-AFDM reproduces the exact IDAFT output sequence, its frequency-wrapping structure generally introduces internal envelope jumps. This limitation leads us to propose SFDM, which preserves the IDAFT output sequence while removing these jumps and maintaining phase continuity.

\subsection{Discrete AFDM}
\label{subsec:discrete_afdm_model}

Consider an AFDM block with $N$ input symbols and total bandwidth $B$. Let $\bm{x}=[x[0],x[1],\ldots,x[N-1]]^T$ denote the data symbol vector in the DAF-domain. The block duration is $T=N/B$, and the sampling instants are $t_n=n/B$, $n=0,1,\ldots,N-1$. The corresponding discrete-time AFDM block is \cite{Bemani2021AFDM,Bemani2023AFDM}
\begin{equation}
s[n]=\frac{1}{\sqrt{N}}\sum_{m=0}^{N-1}x[m]
\exp\left\{j2\pi\left(c_2m^2+c_1n^2+\frac{mn}{N}\right)\right\},
\label{eq:discrete_afdm_samples}
\end{equation}
where $c_1$ and $c_2$ are the time-domain and DAF-domain AFDM chirp parameters, respectively. Throughout this paper, we focus on the case $c_1>0$. The case $c_1<0$ corresponds to the opposite chirp direction and can be treated by reversing the wrapping direction. The same mechanism applies, but the index sets in the spectral analysis need reversed inequalities. 

Eq. \eqref{eq:discrete_afdm_samples} can be written in a more compact form as $\bm{s}=\bm{A}_{\mathrm{IDAFT}}\bm{x}$, where
\[
{}[\bm{A}_{\mathrm{IDAFT}}]_{n,m}
=\frac{1}{\sqrt{N}}
\exp\left\{j2\pi\left(c_2m^2+c_1n^2+\frac{mn}{N}\right)\right\}.
\]
The discrete AFDM model in \eqref{eq:discrete_afdm_samples} specifies only the sequence $\{s[n]\}_{n=0}^{N-1}$ at the sampling instants. A continuous-time realization must reproduce these samples at $\{t_n\}_{n=0}^{N-1}$, but this condition does not uniquely determine the waveform over $0\le t<T$. Therefore, different continuous-time realizations can have the same discrete AFDM block but different trajectories between sampling instants and different spectra.

\begin{figure*}[!t]
    \centering
    \includegraphics[width=0.6\textwidth]{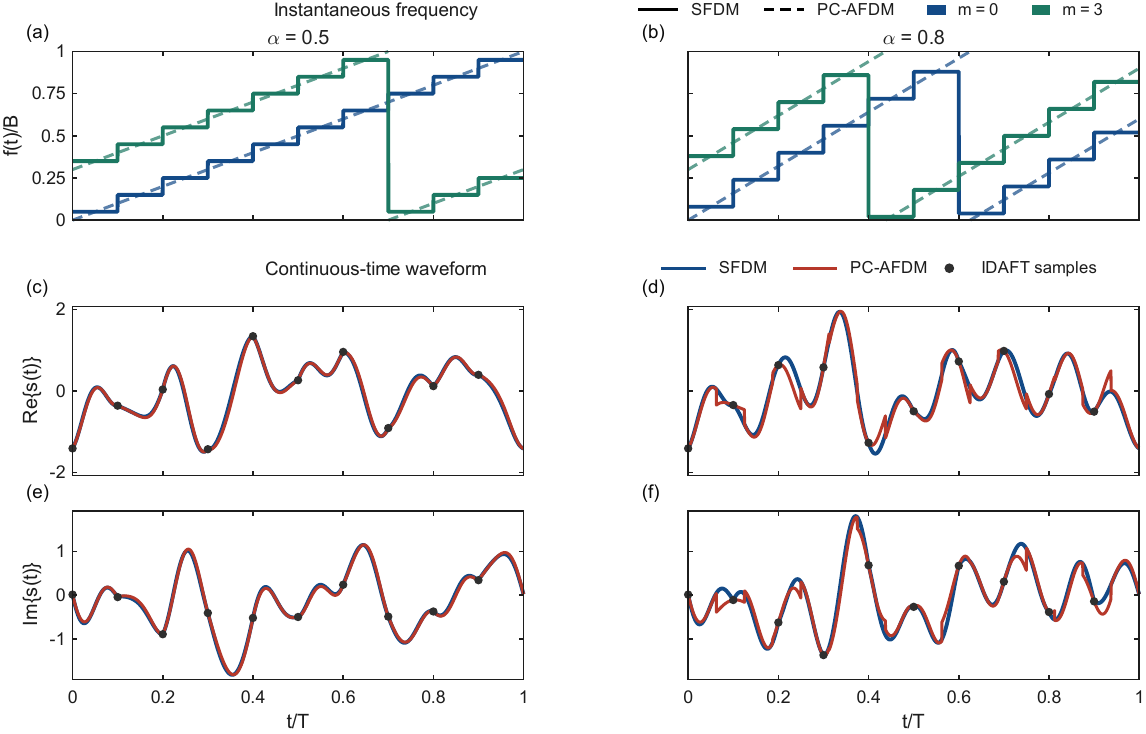}
    \caption{Continuous-time realizations of the same discrete AFDM block. The left and right columns correspond to $\alpha=0.5$ and $\alpha=0.8$, respectively. The top row shows the wrapped instantaneous frequency trajectories of representative subcarriers. The middle and bottom rows show the real and imaginary parts of the transmitted waveform. PC-AFDM and SFDM coincide at the sampling instants, while their trajectories between sampling instants differ. For $\alpha=0.8$, PC-AFDM contains internal envelope jumps, whereas SFDM remains continuous.}
    \label{fig:ct_realization_illustration}
\end{figure*}

\subsection{Piecewise Continuous AFDM}
\label{subsec:pc_afdm_realization}

The continuous-time AFDM waveform, as originally formulated by the proposers of AFDM in \cite{Bemani2024ISAC}, is obtained by rewriting their model in the notation used here. Its phase is specified piecewise according to the frequency wrapping index, and we therefore refer to it as the PC-AFDM waveform. This is the canonical continuous-time realization associated with the AFDM literature \cite{Bemani2024ISAC,11173628,11489295}. However, as we analyze in Section~\ref{sec:discontinuity_spectral_leakage}, this construction introduces internal discontinuities in the complex envelope at wrapping instants, which vanish only for special values of $\alpha$ and otherwise produce a pronounced high-frequency spectral tail. Recognizing this, we propose an alternative continuous-time realization, SFDM, that removes these internal jumps while preserving the IDAFT output sequence of the original AFDM formulation.

In the notation of this paper, the sampling period is $\Delta t=1/B$, the block duration is $T=N\Delta t=N/B$, and the continuous-time chirp rate is $K=2c_1B^2$. We define the normalized chirp-rate parameter as
\begin{equation}
\alpha\triangleq c_1N=\frac{KN}{2B^2}.
\label{eq:alpha_def}
\end{equation}
The $m$-th PC-AFDM phase can be written as
\begin{equation}
\phi_m^{(\mathrm{pc})}(t)=\frac{K}{2}t^2+\frac{m}{T}t-q_m^{(\mathrm{pc})}(t)Bt,
\label{eq:pc_phase_def}
\end{equation}
where
\begin{equation}
q_m^{(\mathrm{pc})}(t)=\left\lfloor\frac{Kt+m/T}{B}\right\rfloor
\label{eq:pc_wrap_count_def}
\end{equation}
is the frequency wrapping index. Whenever a wrapping instant is used in a jump expression, $t^{-}$ and $t^{+}$ denote the corresponding one-sided limits. The value assigned at an isolated wrapping instant does not affect the finite block spectrum.

For the $m$-th subcarrier, the unwrapped instantaneous frequency is $f_m^{(\mathrm{raw})}(t)=Kt+m/T$ for $0\le t<T$. On any open interval where $q_m^{(\mathrm{pc})}(t)$ is constant, the PC-AFDM phase is differentiable and its derivative is
\[
f_m^{(\mathrm{pc})}(t)
=\frac{d\phi_m^{(\mathrm{pc})}(t)}{dt}
=Kt+\frac{m}{T}-Bq_m^{(\mathrm{pc})}(t).
\]
Away from the wrapping instants, the wrapping term in \eqref{eq:pc_phase_def} confines the instantaneous frequency to $[0,B)$. The ordinary derivative is not used at the wrapping instants. The corresponding PC-AFDM waveform is
\begin{equation}
s^{(\mathrm{pc})}(t)
=\frac{1}{\sqrt{N}}\sum_{m=0}^{N-1}x[m]e^{j2\pi c_2m^2}
e^{j2\pi\phi_m^{(\mathrm{pc})}(t)}.
\label{eq:pc_tx_signal}
\end{equation}
At the sampling instants, the wrapping correction contributes an integer phase term because $Bt_n=n\in\mathbb{Z}$. Therefore,
\begin{equation}
e^{j2\pi\phi_m^{(\mathrm{pc})}(t_n)}
=e^{j2\pi(c_1n^2+mn/N)},\quad n=0,1,\ldots,N-1.
\label{eq:pc_sample_equivalence}
\end{equation}

As can be seen, PC-AFDM reproduces the standard IDAFT output sequence at the sampling instants. The waveform between adjacent sampling instants is determined by the wrapping operation in \eqref{eq:pc_wrap_count_def}.

\subsection{SFDM}
\label{subsec:sfdm_realization}

This paper puts forth a new continuous-time AFDM realization dubbed SFDM. It preserves the IDAFT output sequence while replacing the PC-AFDM trajectory between sampling instants with a waveform whose phase is continuous and whose instantaneous frequency is constant within each sampling interval. Let $I_n=[n/B,(n+1)/B)$, $n=0,1,\ldots,N-1$. On $I_n$, the instantaneous frequency is set to the midpoint value of the wrapped chirp:
\begin{equation}
f_m^{(\mathrm{step})}(t)
=K\frac{n+\tfrac{1}{2}}{B}+\frac{m}{T}
-Bq_m^{(\mathrm{step})}[n],
\label{eq:step_freq_def}
\end{equation}
where
\[
q_m^{(\mathrm{step})}[n]
=\left\lfloor\frac{K(n+\tfrac{1}{2})/B+m/T}{B}\right\rfloor.
\]
The floor rule uses the half-open frequency interval $[0,B)$. Thus, if the midpoint value lies on a wrapping boundary, the wrapped frequency is still uniquely specified.

Let $f_{m,n}^{(\mathrm{step})}$ denote the constant value of \eqref{eq:step_freq_def} on $I_n$. The phase is accumulated continuously according to
\begin{equation}
\begin{aligned}
\phi_m^{(\mathrm{step})}(0)&=0,\\
\phi_m^{(\mathrm{step})}\left(\frac{n+1}{B}\right)
&=\phi_m^{(\mathrm{step})}\left(\frac{n}{B}\right)
+\frac{1}{B}f_{m,n}^{(\mathrm{step})}.
\end{aligned}
\label{eq:step_phase_recursion}
\end{equation}
Hence, for $t\in I_n$,
\begin{equation}
\phi_m^{(\mathrm{step})}(t)
=\phi_m^{(\mathrm{step})}\left(\frac{n}{B}\right)
+f_{m,n}^{(\mathrm{step})}\left(t-\frac{n}{B}\right).
\label{eq:step_phase_piecewise}
\end{equation}
The SFDM waveform is
\begin{equation}
s^{(\mathrm{step})}(t)
=\frac{1}{\sqrt{N}}\sum_{m=0}^{N-1}x[m]e^{j2\pi c_2m^2}
e^{j2\pi\phi_m^{(\mathrm{step})}(t)}.
\label{eq:step_tx_signal}
\end{equation}
Thus, the SFDM complex envelope is continuous over $[0,T)$, although its instantaneous frequency may change at sampling interval boundaries.

\begin{thm}[Sampling equivalence of SFDM]
\label{thm:sfdm_sample_equivalence}
At the sampling instants, each SFDM subcarrier coincides with the corresponding standard IDAFT basis vector:
\begin{equation}
e^{j2\pi\phi_m^{(\mathrm{step})}(t_n)}
=e^{j2\pi(c_1n^2+mn/N)}.
\label{eq:step_sample_equivalence}
\end{equation}
Moreover, consider the class of realizations in which the instantaneous frequency on $I_n$ is evaluated at the representative time $(n+\theta)/B$:
\begin{equation}
\begin{aligned}
f_{m,\theta}(t)
&=K\frac{n+\theta}{B}+\frac{m}{T}
-Bq_{m,\theta}[n],\quad t\in I_n,\\
q_{m,\theta}[n]
&=
\left\lfloor
\frac{K(n+\theta)/B+m/T}{B}
\right\rfloor,\quad \theta\in[0,1).
\end{aligned}
\label{eq:theta_representative_def}
\end{equation}
The phase is accumulated continuously and no additional deterministic phase correction is used. Within this class, the midpoint choice $\theta=1/2$ is the unique choice that preserves the IDAFT output sequence uniformly with respect to $\alpha$.
\end{thm}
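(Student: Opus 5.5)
The plan is to compute the accumulated phase at the sampling instants in closed form for a general representative offset $\theta$, reduce it modulo $1$, and read off both claims from the resulting expression.

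\textbf{Phase increment over one interval.} By the recursion \eqref{eq:step_phase_recursion}, the increment over $I_{n'}$ is $\frac{1}{B}f_{m,n'}$. With the general representative time from \eqref{eq:theta_representative_def}, this equals
\[
\frac{K(n'+\theta)}{B^2}+\frac{m}{TB}-q_{m,\theta}[n'].
\]
Three facts simplify it.
\begin{itemize}
\item The wrap count $q_{m,\theta}[n']$ is an integer, so it drops out modulo $1$. This holds regardless of how the floor behaves at wrapping boundaries, which is why the half-open convention causes no trouble.
\item Since $K=2c_1B^2$, we have $K/B^2=2c_1$.
\item Since $TB=N$, we have $m/(TB)=m/N$.
\end{itemize}
Hence, modulo $1$, the increment is $2c_1(n'+\theta)+m/N$.

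\textbf{Summation.} With $\phi(0)=0$, I will sum over $n'=0,\dots,n-1$:
\[
\phi_{m,\theta}(t_n)\equiv 2c_1\Big(\tfrac{n(n-1)}{2}+\theta n\Big)+\frac{mn}{N}
= c_1n^2+\frac{mn}{N}+c_1(2\theta-1)n \pmod 1 .
\]
Taking $\theta=1/2$ kills the last term and gives \eqref{eq:step_sample_equivalence} immediately. That is the first claim.

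\textbf{Uniqueness.} For general $\theta$, sample preservation for all $n$ and $m$ is equivalent to
\[
c_1(2\theta-1)n=\frac{\alpha(2\theta-1)n}{N}\in\mathbb{Z}\quad\text{for all } n=0,\dots,N-1 .
\]
Suppose $\theta\neq 1/2$, so that $0<|2\theta-1|\le 1$. Take $n=1$, which assumes $N\ge 2$. Choose any admissible $\alpha>0$ with $\alpha(2\theta-1)/N\notin\mathbb{Z}$; for instance any $\alpha<N/|2\theta-1|$ works, such as a small $\alpha$. For that $\alpha$, the sample $s[1]$ acquires the nonzero phase error $e^{j2\pi c_1(2\theta-1)}\neq 1$. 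The error is the same for every $m$, so it corrupts the IDAFT output whenever $x\neq 0$.

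The statement also excludes any extra deterministic phase correction. Within the stated class, therefore, nothing can compensate this error, and only $\theta=1/2$ works for all $\alpha$.

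\textbf{Main obstacle.} The computation itself is routine. The delicate point is making the quantifier in ``uniformly with respect to $\alpha$'' precise. I will state it as: for all $\alpha>0$, all $m$, and all $n\le N-1$. I will also note that for special rational $\alpha$ some $\theta\neq 1/2$ can accidentally preserve the samples, so uniqueness genuinely requires the uniformity over $\alpha$.
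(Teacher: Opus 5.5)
Your proposal is correct and follows the paper's proof. The same telescoping sum of the per-interval increments gives $c_1n^2+mn/N+c_1(2\theta-1)n$ up to an integer. Uniqueness rests on the same observation as the paper, that $\alpha(2\theta-1)n/N$ cannot be an integer for all $\alpha$ unless $\theta=1/2$; you make this explicit by taking $n=1$ and $0<\alpha<N/|2\theta-1|$.

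One small correction: the extra phase $c_1(2\theta-1)n$ is common to all $m$, so the sampled matrix is $\operatorname{diag}(e^{j2\pi c_1(2\theta-1)n})\bm A_{\mathrm{IDAFT}}$. The basis vectors therefore always differ, but the output is corrupted only when $s[1]\neq 0$, not for every nonzero $\bm x$.
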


\begin{proof}
Evaluating \eqref{eq:step_phase_piecewise} at $t_n=n/B$ gives
\[
\begin{aligned}
\phi_m^{(\mathrm{step})}(t_n)
&=\sum_{i=0}^{n-1}\frac{1}{B}
\left[K\frac{i+\tfrac{1}{2}}{B}+\frac{m}{T}
-Bq_m^{(\mathrm{step})}[i]\right]\\
&=\frac{K}{B^2}\sum_{i=0}^{n-1}\left(i+\frac{1}{2}\right)
+\frac{mn}{BT}-\sum_{i=0}^{n-1}q_m^{(\mathrm{step})}[i].
\end{aligned}
\]
Using $\sum_{i=0}^{n-1}(i+1/2)=n^2/2$, $T=N/B$, and $c_1=K/(2B^2)$, we obtain
\[
\phi_m^{(\mathrm{step})}(t_n)=c_1n^2+\frac{mn}{N}-\kappa_{m,n},
\]
where $\kappa_{m,n}=\sum_{i=0}^{n-1}q_m^{(\mathrm{step})}[i]\in\mathbb{Z}$. Exponentiating by $j2\pi(\cdot)$ establishes \eqref{eq:step_sample_equivalence}.

For \eqref{eq:theta_representative_def}, the same summation gives
\[
\begin{aligned}
\phi_{m,\theta}(t_n)
&=\frac{K}{B^2}\sum_{i=0}^{n-1}(i+\theta)
+\frac{mn}{BT}-\sum_{i=0}^{n-1}q_{m,\theta}[i]\\
&=c_1n^2+\frac{mn}{N}+c_1(2\theta-1)n
-\sum_{i=0}^{n-1}q_{m,\theta}[i].
\end{aligned}
\]
The final summation is integer-valued. Therefore, exact agreement with the IDAFT exponential for all $n$ and uniformly over the normalized chirp-rate parameter $\alpha$ requires $\alpha(2\theta-1)n/N$ to be an integer for every $n$. Since $\alpha$ can vary over a continuous range, this condition can hold only when $2\theta-1=0$, i.e., $\theta=1/2$.
\end{proof}

% \begin{rem}
% The term SFDM deliberately omits the word ``affine''. In PC-AFDM, the unwrapped instantaneous frequency of each subcarrier follows an affine function of time, and the wrapping operation preserves this affine relationship on each open interval of constant wrap count. The PC-AFDM waveform is genuinely an affine frequency division multiplexing waveform, exhibiting chirp segments of constant slope. SFDM, by contrast, replaces this affine frequency trajectory with a staircase of constant frequencies: on each sampling interval, the instantaneous frequency is held fixed, and the phase accumulates linearly. Between intervals, the frequency may jump to a new constant value. The piecewise constant nature is what ``stepped'' designates. Consequently, the time-frequency trajectory of SFDM is not piecewise affine but piecewise constant in frequency; the affine structure is deliberately sacrificed to achieve a globally continuous complex envelope. Calling it ``stepped affine'' would conflate two mutually exclusive descriptions of the instantaneous frequency trajectory. We therefore use SFDM to reflect this distinction.
% \end{rem}

\begin{rem}[On the naming of SFDM]
The term SFDM deliberately omits ``affine'' to avoid a conceptual contradiction. In PC-AFDM, the unwrapped instantaneous frequency of each subcarrier is affine in time, and this affine dependence is preserved on every open interval with constant wrap count, producing genuine chirp segments of constant slope. SFDM, by contrast, replaces this affine trajectory with a staircase of constant instantaneous frequencies: on each sampling interval the frequency is held fixed, the phase accumulates linearly, and the frequency changes only at interval boundaries. The affine time-frequency structure is intentionally set aside in order to obtain a globally continuous complex envelope. Describing this piecewise constant frequency waveform as ``stepped affine'' would conflate two mutually incompatible characterizations of the instantaneous frequency trajectory. The name SFDM reflects this deliberate distinction.
\end{rem}

Let $\bm{A}_{\mathrm{pc}}$ and $\bm{A}_{\mathrm{step}}$ denote the matrices obtained by sampling the PC-AFDM and SFDM basis functions at $\{t_n\}_{n=0}^{N-1}$, respectively. For $\xi\in\{\mathrm{pc},\mathrm{step}\}$, their entries are
\[
{}[\bm{A}_{\xi}]_{n,m}
=\frac{1}{\sqrt{N}}\exp\left\{j2\pi\left(c_2m^2+\phi_m^{(\xi)}(t_n)\right)\right\}.
\]
By \eqref{eq:pc_sample_equivalence} and Theorem~\ref{thm:sfdm_sample_equivalence},
\begin{equation}
\bm{A}_{\mathrm{pc}}\equiv\bm{A}_{\mathrm{step}}\equiv\bm{A}_{\mathrm{IDAFT}}.
\label{eq:matrix_equivalence}
\end{equation}
Thus, the sampled matrices of PC-AFDM and SFDM are both equal to the standard IDAFT modulation matrix. Their difference lies in how the waveform evolves between adjacent sampling instants. This difference affects envelope continuity, spectral behavior, and the sampled input-output relation after continuous-time propagation.

Fig.~\ref{fig:ct_realization_illustration} illustrates this distinction. The two realizations have identical values at the sampling instants, while their trajectories between adjacent samples are different. For $\alpha=0.5$, the phase reset in PC-AFDM corresponds to integer multiples of $2\pi$, and the wrapping-induced jumps vanish. For $\alpha=0.8$, PC-AFDM exhibits internal complex envelope jumps at wrapping instants, whereas SFDM remains continuous because its phase is accumulated across sampling intervals.

\begin{rem}
 The uniqueness statement in Theorem~\ref{thm:sfdm_sample_equivalence} applies only to the class in \eqref{eq:theta_representative_def}. It does not exclude other continuous-time realizations that reproduce the same discrete AFDM block, such as realizations based on interpolation, pulse shaping, or additional phase correction. This class is useful because it keeps the wrapped instantaneous frequency interpretation of AFDM and changes only the representative frequency used inside each sampling interval. It also avoids an external interpolation filter. Therefore, the result should be interpreted as a uniqueness result for this phase accumulation construction, not as a uniqueness result among all possible continuous-time AFDM waveforms.   
\end{rem}

\subsection{Local Relation Between PC-AFDM and SFDM}
\label{subsec:local_inter_sample_relation}

The spectral analysis in Section~\ref{sec:discontinuity_spectral_leakage} will show that PC-AFDM contains an additional OOBE contribution relative to SFDM.
Before focusing on wrapping-induced jumps, it is useful to quantify the local difference between PC-AFDM and SFDM on intervals where no wrapping occurs. On such intervals, PC-AFDM has a quadratic phase, whereas SFDM has a linear phase determined by the midpoint frequency. The following proposition shows that this local difference is uniformly bounded after a constant phase rotation.
% In any interval where the PC-AFDM wrap count remains constant, the phase trajectories of PC-AFDM and SFDM are both smooth within the interval, but their functional forms differ.
% PC-AFDM follows a quadratic phase (the original unwrapped chirp minus a constant offset), while SFDM follows a linear phase determined by the midpoint frequency.
% If this local approximation error were already large, it could dominate the OOBE difference, rendering the subsequent jump-focused analysis irrelevant.
% The purpose of this subsection is therefore to quantify the deviation between the two waveforms away from wrapping instants and to demonstrate that it is uniformly small.
% Hence, any significant spectral difference must originate from the wrapping-induced discontinuities, not from the smooth interpolation error.

To formalize this idea, define the complex envelope basis function of realization $\xi\in\{\mathrm{pc},\mathrm{step}\}$ as $g_m^{(\xi)}(t)=e^{j2\pi\phi_m^{(\xi)}(t)}$, $0\leq t< T$.
Because the two realizations share the same samples at $t_n=n/B$, a natural way to compare them locally is to examine the difference between $g_m^{(\mathrm{pc})}(t)$ and a phase-rotated version of $g_m^{(\mathrm{step})}(t)$ on each sampling interval where the PC-AFDM wrap count is constant. Proposition \ref{prop:local_pc_sfdm_relation} below provides a precise bound.

\begin{prop}[Local relation between PC-AFDM and SFDM]
\label{prop:local_pc_sfdm_relation}
Consider an interval $I_n$ on which the PC-AFDM wrap count of the $m$-th subcarrier is constant and equal to $q_{m,n}$. Then, for all $t\in I_n$,
\begin{equation}
\phi_m^{(\mathrm{pc})}(t)-\phi_m^{(\mathrm{step})}(t)
=C_{m,n}+\frac{K}{2}(t-t_{c,n})^2,
\label{eq:local_phase_relation}
\end{equation}
where $C_{m,n}$ is independent of $t$ and $t_{c,n}=(n+1/2)/B$. Consequently,
\begin{equation}
\left|g_m^{(\mathrm{pc})}(t)-e^{j2\pi C_{m,n}}g_m^{(\mathrm{step})}(t)\right|
\le \pi K(t-t_{c,n})^2
\le \frac{\pi\alpha}{2N}.
\label{eq:local_waveform_bound}
\end{equation}
\end{prop}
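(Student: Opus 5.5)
On $I_n$ both phases have explicit forms, so the plan is to subtract them and complete the square. By \eqref{eq:pc_phase_def}, with the wrap count constant and equal to $q_{m,n}$,
\[
\phi_m^{(\mathrm{pc})}(t)=\tfrac{K}{2}t^2+\tfrac{m}{T}t-q_{m,n}Bt .
\]
By \eqref{eq:step_phase_piecewise},
\[
\phi_m^{(\mathrm{step})}(t)=\phi_m^{(\mathrm{step})}(n/B)+f^{(\mathrm{step})}_{m,n}(t-n/B),
\]
which is affine in $t$ with slope $f^{(\mathrm{step})}_{m,n}=Kt_{c,n}+m/T-Bq_m^{(\mathrm{step})}[n]$.

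The key observation concerns the midpoint $t_{c,n}\in I_n$. The PC wrap count is constant on $I_n$, so $q_m^{(\mathrm{step})}[n]$, which is exactly $q_m^{(\mathrm{pc})}(t_{c,n})$, equals $q_{m,n}$. Hence the linear coefficient of the difference $\phi_m^{(\mathrm{pc})}-\phi_m^{(\mathrm{step})}$ is $-Kt_{c,n}$, and the difference is the quadratic
\[
\tfrac{K}{2}t^2-Kt_{c,n}t+\text{const}=\tfrac{K}{2}(t-t_{c,n})^2+C_{m,n}.
\]
This gives \eqref{eq:local_phase_relation}, with $C_{m,n}$ collecting $\phi_m^{(\mathrm{step})}(n/B)$, the $f^{(\mathrm{step})}_{m,n}n/B$ term and $-\tfrac{K}{2}t_{c,n}^2$. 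I expect this identification $q_m^{(\mathrm{step})}[n]=q_{m,n}$ to be the one point needing care. The floor conventions agree exactly because both use $\lfloor\cdot\rfloor$ with the same argument evaluated at $t_{c,n}$, and $t_{c,n}$ lies in the open interior of $I_n$. If the hypothesis is meant as "constant on the open interval", the midpoint is still interior, so nothing changes.

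For the waveform bound, I factor out $e^{j2\pi\phi_m^{(\mathrm{pc})}(t)}$, which has unit modulus, and use \eqref{eq:local_phase_relation} to write
\[
\bigl|g_m^{(\mathrm{pc})}(t)-e^{j2\pi C_{m,n}}g_m^{(\mathrm{step})}(t)\bigr|=\bigl|1-e^{-j2\pi\frac{K}{2}(t-t_{c,n})^2}\bigr| .
\]
Then $|1-e^{-jx}|=2|\sin(x/2)|\le|x|$ yields the bound $\pi K(t-t_{c,n})^2$.

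Finally, on $I_n$ we have $|t-t_{c,n}|\le 1/(2B)$, so $\pi K(t-t_{c,n})^2\le \pi K/(4B^2)$. Substituting $K=2\alpha B^2/N$ from \eqref{eq:alpha_def} gives $\pi\alpha/(2N)$. This holds with $c_1>0$ assumed throughout the paper, so $K>0$ and no absolute value on $K$ is needed.
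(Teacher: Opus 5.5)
Your proposal is correct and takes essentially the same approach as the paper. The paper differentiates $\phi_m^{(\mathrm{pc})}-\phi_m^{(\mathrm{step})}$ to get $K(t-t_{c,n})$ and integrates, while you subtract the explicit quadratic and affine forms and complete the square; both routes rest on $q_m^{(\mathrm{step})}[n]=q_m^{(\mathrm{pc})}(t_{c,n})=q_{m,n}$, which you state explicitly and the paper uses implicitly, and both then apply $|e^{jx}-1|\le|x|$ with $|t-t_{c,n}|\le 1/(2B)$.
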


\begin{proof}
On $I_n$, the derivative of the PC-AFDM phase is $d\phi_m^{(\mathrm{pc})}(t)/dt=Kt+m/T-q_{m,n}B$. Since the wrap count is constant on $I_n$, the SFDM frequency is $f_{m,n}^{(\mathrm{step})}=K(n+1/2)/B+m/T-q_{m,n}B$. Therefore,
\[
\frac{d}{dt}\left[\phi_m^{(\mathrm{pc})}(t)-\phi_m^{(\mathrm{step})}(t)\right]
=K(t-t_{c,n}).
\]
Integrating over $I_n$ gives \eqref{eq:local_phase_relation}. The bound \eqref{eq:local_waveform_bound} follows from $|e^{jx}-1|\le |x|$ and $|t-t_{c,n}|\le 1/(2B)$.
\end{proof}

Proposition~\ref{prop:local_pc_sfdm_relation} shows that, on intervals without wrapping, the two basis functions differ only by a bounded local phase curvature term after a constant phase rotation. When $\alpha/N$ is small, this local difference is much smaller than the possible unit-order envelope jumps at wrapping instants. This supports the subsequent focus on wrapping-induced jumps as the dominant source of the additional high-frequency tail.

This local analysis justifies treating the smooth interpolation error as a secondary effect in the subsequent OOBE analysis: the spectral leakage contributed by these intervals cannot explain the strong high-frequency tail observed for PC-AFDM at generic values of $\alpha$.
With this local smoothness verified, we can safely attribute the OOBE advantage of SFDM to the removal of internal jumps, a topic to which we now turn.
\section{Discontinuity-Induced Spectral Tails}
\label{sec:discontinuity_spectral_leakage}

This section analyzes how the waveform between sampling instants affects the spectrum. The main result is that PC-AFDM can introduce internal jumps in the complex envelope. These jumps create an additional contribution to the high-frequency spectral tail, which is absent in SFDM. Both realizations still contain the finite block endpoint contribution. The tail analysis therefore isolates the effect of the internal jumps, while the full OOBE also depends on the spectrum near the nominal band.

\subsection{Spectral Measures for Continuous-Time AFDM}
\label{subsec:spectral_measures}

For $\xi\in\{\mathrm{pc},\mathrm{step}\}$, define the finite block spectrum as $S^{(\xi)}(f)=\int_0^T s^{(\xi)}(t)e^{-j2\pi ft}dt$. Using \eqref{eq:pc_tx_signal} and \eqref{eq:step_tx_signal}, and with $g_m^{(\xi)}(t)=e^{j2\pi\phi_m^{(\xi)}(t)}$, both waveforms can be written as
\begin{equation}
s^{(\xi)}(t)=\frac{1}{\sqrt{N}}\sum_{m=0}^{N-1}x[m]e^{j2\pi c_2m^2}g_m^{(\xi)}(t).
\label{eq:common_waveform_def}
\end{equation}
Assuming mutually independent, zero-mean, unit-variance symbols, the average energy spectral density (ESD) \cite{oppenheim1997signals} is
\begin{equation}
\Phi^{(\xi)}(f)=\mathbb{E}\left[|S^{(\xi)}(f)|^2\right]
=\frac{1}{N}\sum_{m=0}^{N-1}|G_m^{(\xi)}(f)|^2,
\label{eq:average_esd_def}
\end{equation}
where $G_m^{(\xi)}(f)=\int_0^T g_m^{(\xi)}(t)e^{-j2\pi ft}dt$. The averaging in \eqref{eq:average_esd_def} is only over the data symbols. The functions $g_m^{(\xi)}(t)$ are deterministic for fixed $N$, $B$, $c_1$, and $c_2$. The factor $e^{j2\pi c_2m^2}$ is a constant phase for the $m$-th subcarrier, so it does not affect $|G_m^{(\xi)}(f)|^2$ in the average ESD.

The normalized OOBE ratio with respect to the nominal occupied frequency interval $[0,B)$ is defined as
\begin{equation}
\eta_{\mathrm{OOBE}}^{(\xi)}
=\frac{1}{T}\int_{\mathbb{R}\setminus[0,B)}\Phi^{(\xi)}(f)df.
\label{eq:oobe_ratio_def}
\end{equation}
By Parseval's identity, $\int_{-\infty}^{\infty}\Phi^{(\xi)}(f)df=T$. The OOBE ratio in \eqref{eq:oobe_ratio_def} integrates the entire out-of-band region. It is therefore not determined only by the high-frequency tail coefficient derived later.

The following analysis separates two spectral contributions. A finite block always produces endpoint terms at $t=0$ and $t=T$, which are common to PC-AFDM and SFDM. The distinguishing term is the possible internal jump of the PC-AFDM complex envelope. We therefore first locate the internal wrapping instants and their jump magnitudes, and then relate these jumps to the high-frequency spectral tail.

\subsection{Wrapping-Induced Discontinuities in PC-AFDM}
\label{subsec:pc_wrapping_discontinuities}

We first identify where PC-AFDM may lose continuity. These points occur when the unwrapped affine instantaneous frequency crosses an integer multiple of the bandwidth. The normalized chirp-rate parameter $\alpha$ is defined in \eqref{eq:alpha_def}. To distinguish the time-varying wrap count $q_m^{(\mathrm{pc})}(t)$ from the integer boundary crossed by the unwrapped instantaneous frequency, let $r$ denote the wrapping boundary index. For $c_1>0$, the admissible internal boundary index set is
\begin{equation}
\mathcal{R}_m=\left\{r\in\mathbb{Z}:\frac{m}{N}<r<2\alpha+\frac{m}{N}\right\}.
\label{eq:wrap_index_set}
\end{equation}
The number of internal wrapping instants is
\[
J_m=|\mathcal{R}_m|
=\max\left\{0,\left\lceil 2\alpha+\frac{m}{N}\right\rceil-1\right\}.
\]

\begin{lem}[Internal wrapping instants and jump magnitudes]
\label{lem:internal_wrapping_geometry}
For the $m$-th PC-AFDM subcarrier, each $r\in\mathcal{R}_m$ gives an internal wrapping instant
\begin{equation}
t_{m,r}=\frac{rB-m/T}{K}=\frac{Nr-m}{2\alpha B}.
\label{eq:wrap_time}
\end{equation}
At an internal wrapping instant $t_{m,r}$, the jump satisfies
\begin{equation}
\begin{aligned}
\Delta g_{m,r}&=g_m^{(\mathrm{pc})}(t_{m,r}^{+})
-g_m^{(\mathrm{pc})}(t_{m,r}^{-})\\
&=g_m^{(\mathrm{pc})}(t_{m,r}^{-})
\left(e^{-j2\pi Bt_{m,r}}-1\right),
\end{aligned}
\label{eq:jump_def}
\end{equation}
and
\begin{equation}
|\Delta g_{m,r}|^2
=4\sin^2(\pi Bt_{m,r})
=4\sin^2\left(\frac{\pi(Nr-m)}{2\alpha}\right).
\label{eq:jump_magnitude}
\end{equation}
\end{lem}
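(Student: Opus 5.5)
The plan is to work directly from the definitions \eqref{eq:pc_phase_def} and \eqref{eq:pc_wrap_count_def}. The first step locates the discontinuities of the wrap count $q_m^{(\mathrm{pc})}(t)$. Since $c_1>0$, the argument $u(t)=(Kt+m/T)/B$ is strictly increasing and affine on $[0,T)$. It runs from $u(0)=m/N$ up to $u(T^-)=(KT+m/T)/B = 2\alpha+m/N$, using $KT/B=KN/B^2=2\alpha$ from \eqref{eq:alpha_def}. The floor function jumps exactly when $u(t)$ crosses an integer $r$, so an internal crossing (strictly inside $(0,T)$) occurs precisely for $r\in\mathcal{R}_m$. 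Solving $u(t)=r$ gives $t_{m,r}=(rB-m/T)/K$. Substituting $K=2\alpha B^2/N$ and $1/T=B/N$ yields $(Nr-m)/(2\alpha B)$, which is \eqref{eq:wrap_time}. At $t_{m,r}$ the wrap count increases by exactly one: $q^{-}=r-1$ and $q^{+}=r$. This holds because $u$ is continuous and monotone, and distinct integers are crossed at distinct times.

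The second step computes the jump. The first two terms of $\phi_m^{(\mathrm{pc})}$, namely $\frac{K}{2}t^2+\frac{m}{T}t$, are continuous, so the one-sided limits differ only through the wrapping term. This gives
\[
\phi_m^{(\mathrm{pc})}(t_{m,r}^{+})-\phi_m^{(\mathrm{pc})}(t_{m,r}^{-})=-(q^{+}-q^{-})Bt_{m,r}=-Bt_{m,r}.
\]
Therefore $g_m^{(\mathrm{pc})}(t_{m,r}^{+})=g_m^{(\mathrm{pc})}(t_{m,r}^{-})e^{-j2\pi Bt_{m,r}}$. Subtracting $g_m^{(\mathrm{pc})}(t_{m,r}^{-})$ gives \eqref{eq:jump_def}.

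The third step gives the magnitude. We have $|g_m^{(\mathrm{pc})}(t^{-})|=1$, and $|e^{-jx}-1|^2=2-2\cos x=4\sin^2(x/2)$ with $x=2\pi Bt_{m,r}$. This gives $4\sin^2(\pi Bt_{m,r})$. Inserting $Bt_{m,r}=(Nr-m)/(2\alpha)$ gives \eqref{eq:jump_magnitude}.

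No step is a genuine obstacle. The only care needed is the bookkeeping in the first step: the inequalities in $\mathcal{R}_m$ must be strict, so the endpoints $t=0$ and $t=T$ are excluded. When $m/N$ is itself an integer (only $m=0$, with $r=0$ at $t=0$), that point is an endpoint and not an internal instant, which is consistent with the strict left inequality. The formula for $J_m$ should also be checked as a count of the integers in the open interval $(m/N,\,2\alpha+m/N)$. Since $0\le m/N<1$, the smallest candidate integer is $r=1$ and the largest is $\lceil 2\alpha+m/N\rceil-1$, which gives $J_m$ (clipped at zero).
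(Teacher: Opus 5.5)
Your proposal is correct and follows essentially the same route as the paper: you locate the wrapping instants by solving $Kt+m/T=rB$ subject to $0<t_{m,r}<T$, observe that the wrap count increases by exactly one so the phase changes by $-Bt_{m,r}$, and use $|e^{-jx}-1|^2=4\sin^2(x/2)$ with $|g_m^{(\mathrm{pc})}(t_{m,r}^{-})|=1$. Your extra bookkeeping (strict monotonicity of the floor argument giving $q^{-}=r-1$, $q^{+}=r$, and the explicit count of integers in $(m/N,\,2\alpha+m/N)$ for $J_m$) spells out steps the paper compresses into single sentences.
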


\begin{proof}
The internal wrapping instants occur when the floor operator argument in \eqref{eq:pc_wrap_count_def} crosses an integer boundary, namely $Kt+m/T=rB$. Solving for $t$ gives \eqref{eq:wrap_time}. The condition $0<t_{m,r}<T=N/B$ is equivalent to $0<(Nr-m)/(2\alpha B)<N/B$, which yields \eqref{eq:wrap_index_set}. Counting the admissible integers gives the expression for $J_m$.

At a wrapping instant, the wrap count increases by one. Hence, the phase in \eqref{eq:pc_phase_def} changes by $-Bt_{m,r}$, giving
\[
g_m^{(\mathrm{pc})}(t_{m,r}^{+})
=g_m^{(\mathrm{pc})}(t_{m,r}^{-})e^{-j2\pi Bt_{m,r}}.
\]
This proves \eqref{eq:jump_def}. Since $|g_m^{(\mathrm{pc})}(t_{m,r}^{-})|=1$, we have $|\Delta g_{m,r}|^2=|e^{-j2\pi Bt_{m,r}}-1|^2=4\sin^2(\pi Bt_{m,r})$. Substituting \eqref{eq:wrap_time} gives \eqref{eq:jump_magnitude}.
\end{proof}

Lemma~\ref{lem:internal_wrapping_geometry} separates the occurrence of internal wrapping from the jump magnitude. The former is determined by the unwrapped affine instantaneous frequency. The latter depends on whether the phase reset at a wrapping instant corresponds to an integer multiple of $2\pi$.

\begin{cor}[Jump of the transmitted PC-AFDM waveform]
\label{cor:tx_waveform_jump}
Let $t_0$ be an internal wrapping instant of at least one PC-AFDM subcarrier, and let $\Delta g_m(t_0)=g_m^{(\mathrm{pc})}(t_0^+)-g_m^{(\mathrm{pc})}(t_0^-)$ if the $m$-th subcarrier wraps at $t_0$ and $\Delta g_m(t_0)=0$ otherwise. Then the jump of the transmitted PC-AFDM waveform at $t_0$ is
\begin{equation}
\Delta s^{(\mathrm{pc})}(t_0)
=
\frac{1}{\sqrt{N}}\sum_{m=0}^{N-1}
x[m]e^{j2\pi c_2m^2}\Delta g_m(t_0).
\label{eq:tx_waveform_jump}
\end{equation}
If the data symbols are mutually independent, zero mean, and unit variance, then
\begin{equation}
\mathbb{E}\left[|\Delta s^{(\mathrm{pc})}(t_0)|^2\right]
=
\frac{1}{N}\sum_{m=0}^{N-1}|\Delta g_m(t_0)|^2.
\label{eq:tx_waveform_jump_energy}
\end{equation}
\end{cor}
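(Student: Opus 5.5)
The plan is to use linearity of the one-sided limits and then compute the second moment directly. Take limits $t\to t_0^{\pm}$ termwise in \eqref{eq:pc_tx_signal}. This is legitimate because the sum is finite and each $g_m^{(\mathrm{pc})}$ is piecewise continuous with both one-sided limits existing at every point, so
\[
s^{(\mathrm{pc})}(t_0^{\pm})=\frac{1}{\sqrt{N}}\sum_{m=0}^{N-1}x[m]e^{j2\pi c_2m^2}g_m^{(\mathrm{pc})}(t_0^{\pm}).
\]

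Subtracting the two limits gives a sum of the terms $g_m^{(\mathrm{pc})}(t_0^+)-g_m^{(\mathrm{pc})}(t_0^-)$. Split the indices into two groups. For $m$ with $t_0=t_{m,r}$ for some $r\in\mathcal{R}_m$, Lemma~\ref{lem:internal_wrapping_geometry} identifies the term as $\Delta g_{m,r}$, which equals the $\Delta g_m(t_0)$ of the statement. For every other $m$, the argument $(Kt+m/T)/B$ of the floor in \eqref{eq:pc_wrap_count_def} is not an integer at $t_0$. The wrap count is therefore constant on a neighbourhood of $t_0$, the phase \eqref{eq:pc_phase_def} is continuous there, and the difference vanishes, consistent with $\Delta g_m(t_0)=0$. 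Since $t_0\in(0,T)$, boundary crossings at the endpoints need not be considered. This yields \eqref{eq:tx_waveform_jump}.

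For \eqref{eq:tx_waveform_jump_energy}, write $a_m=e^{j2\pi c_2m^2}\Delta g_m(t_0)/\sqrt N$. These coefficients are deterministic, since the jumps depend only on $N,B,c_1,m$ and not on the data. Expanding $|\sum_m x[m]a_m|^2$ into a double sum gives
\[
\mathbb{E}\bigl[|\Delta s^{(\mathrm{pc})}(t_0)|^2\bigr]=\sum_{m,m'}a_m a_{m'}^*\,\mathbb{E}\bigl[x[m]x[m']^*\bigr].
\]
For $m\neq m'$, independence and zero mean make the cross terms vanish. For $m=m'$, unit variance gives $\mathbb{E}|x[m]|^2=1$. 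What remains is $\sum_m|a_m|^2=\frac1N\sum_m|\Delta g_m(t_0)|^2$, because the chirp phase factor has unit modulus.

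The only delicate step is the case split above: one must confirm that non-wrapping subcarriers contribute no jump at $t_0$, and that the subcarriers which do wrap at $t_0$ are exactly those with $Nr-m=2\alpha Bt_0$ for some admissible $r$. Both facts follow from the floor-crossing characterization in the proof of Lemma~\ref{lem:internal_wrapping_geometry}.
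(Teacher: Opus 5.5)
Your proposal is correct and follows the same route as the paper: you take one-sided limits termwise in the finite sum defining $s^{(\mathrm{pc})}(t)$, then expand $|\Delta s^{(\mathrm{pc})}(t_0)|^2$ and use independence and zero mean to remove the cross terms and unit variance for the diagonal ones. Your extra check that every non-wrapping subcarrier has a constant wrap count near $t_0$, so that $\Delta g_m(t_0)=0$ agrees with the actual jump, is correct; the paper leaves it implicit.
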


\begin{proof}
Equation \eqref{eq:tx_waveform_jump} follows directly from \eqref{eq:common_waveform_def}. For \eqref{eq:tx_waveform_jump_energy}, expand $|\Delta s^{(\mathrm{pc})}(t_0)|^2$. The cross terms vanish because the data symbols are mutually independent and zero mean, while the diagonal terms have unit variance.
\end{proof}

Corollary~\ref{cor:tx_waveform_jump} clarifies the relation between subcarrier jumps and the transmitted waveform. For a particular data block, jumps from different subcarriers may cancel at a specific time. In the average ESD considered in \eqref{eq:average_esd_def}, the jump contribution is determined by the squared jump magnitudes of the basis functions.

The next result identifies the parameter regimes in which the PC-AFDM basis functions remain continuous. It follows directly from the wrapping condition and the jump magnitude in Lemma~\ref{lem:internal_wrapping_geometry}.

\begin{thm}[Continuity condition for PC-AFDM basis functions]
\label{thm:pc_continuity_criterion}
The PC-AFDM basis functions are continuous over $[0,T)$ for all subcarriers if and only if one of the following two conditions holds:
\begin{equation}
\alpha\le \frac{1}{2N},
\label{eq:no_wrap_condition_alpha}
\end{equation}
or
\begin{equation}
\alpha=\frac{1}{2k},\quad k\in\mathbb{Z}_{>0}.
\label{eq:special_alpha_condition}
\end{equation}
\end{thm}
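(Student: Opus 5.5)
The plan is to reduce continuity of each basis function to a condition on its internal wrapping instants, and then to use Lemma~\ref{lem:internal_wrapping_geometry} to turn that condition into an arithmetic constraint on $\alpha$.

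\textbf{Reduction to jumps.} On every open subinterval of $(0,T)$ on which $q_m^{(\mathrm{pc})}(t)$ is constant, the phase \eqref{eq:pc_phase_def} is a polynomial in $t$. So $g_m^{(\mathrm{pc})}$ is continuous there, and it can fail to be continuous only at the internal wrapping instants $t_{m,r}$, $r\in\mathcal{R}_m$. The endpoints $t=0$ and $t=T$ are excluded by the strict inequalities in \eqref{eq:wrap_index_set}. By \eqref{eq:jump_magnitude}, $g_m^{(\mathrm{pc})}$ is continuous on $[0,T)$ if and only if, for every $r\in\mathcal{R}_m$,
\[
\sin^2\!\left(\frac{\pi(Nr-m)}{2\alpha}\right)=0,
\]
that is, if and only if $Bt_{m,r}=(Nr-m)/(2\alpha)\in\mathbb{Z}$. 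The theorem is therefore equivalent to the statement that this integrality holds for all pairs $(m,r)$ with $r\in\mathcal{R}_m$ exactly when \eqref{eq:no_wrap_condition_alpha} or \eqref{eq:special_alpha_condition} holds.

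\textbf{Sufficiency.} If $\alpha\le 1/(2N)$, then $2\alpha+m/N\le (m+1)/N\le 1$ for every $m\le N-1$. No integer $r$ then satisfies $m/N<r<2\alpha+m/N$, so $\mathcal{R}_m=\emptyset$ for all $m$ and there are no jumps at all. If instead $\alpha=1/(2k)$ with $k\in\mathbb{Z}_{>0}$, then
\[
Bt_{m,r}=k(Nr-m)\in\mathbb{Z}
\]
for every $m$ and $r$. Every jump magnitude therefore vanishes, whether or not wrapping occurs.

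\textbf{Necessity.} Suppose $\alpha>1/(2N)$. I will exhibit one wrapping instant that forces \eqref{eq:special_alpha_condition}; the natural witness is the top subcarrier $m=N-1$ with $r=1$. Since $(N-1)/N<1$, and $1<2\alpha+(N-1)/N$ is equivalent to $\alpha>1/(2N)$, we get $1\in\mathcal{R}_{N-1}$. Continuity of $g_{N-1}^{(\mathrm{pc})}$ then requires
\[
Bt_{N-1,1}=\frac{N-(N-1)}{2\alpha}=\frac{1}{2\alpha}
\]
to be an integer $k$. Because $\alpha>0$, this integer is positive, which gives $\alpha=1/(2k)$ with $k\in\mathbb{Z}_{>0}$.

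\textbf{Main obstacle.} The only delicate point is choosing this witness. A generic pair $(m,r)$ yields only the weaker constraint that $(Nr-m)/(2\alpha)$ is an integer. The pair $m=N-1$, $r=1$ is the one that makes $Nr-m=1$, and it is also the first wrap to appear as $\alpha$ grows past $1/(2N)$; this is exactly why the threshold in \eqref{eq:no_wrap_condition_alpha} matches.

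A remaining boundary check is that the values $\alpha=1/(2k)$ with $k\ge N$ already fall under \eqref{eq:no_wrap_condition_alpha}, so the two conditions overlap there, which is harmless for an ``if and only if'' with a disjunction. The half-open conventions also need checking: a wrap landing exactly at $t=T$, which happens when $2\alpha+m/N\in\mathbb{Z}$, is excluded by the strict inequality in \eqref{eq:wrap_index_set}, consistent with continuity being required only on $[0,T)$.
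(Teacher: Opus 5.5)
Your proposal is correct and follows the paper's proof essentially step for step. Like the paper, you get the no-wrap regime from $\mathcal{R}_{N-1}=\varnothing$, sufficiency of $\alpha=1/(2k)$ from $Bt_{m,r}=k(Nr-m)\in\mathbb{Z}$, and necessity from the witness $m=N-1$, $r=1$ with $Bt_{N-1,1}=1/(2\alpha)$; your extra remarks on reducing continuity to vanishing jumps, on excluding a wrap exactly at $t=T$, and on the overlap of the two conditions for $k\ge N$ are harmless refinements that the paper leaves implicit.
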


\begin{proof}
No internal wrapping occurs for any subcarrier if and only if $\mathcal{R}_m=\varnothing$ for all $m$. The most stringent condition is obtained at $m=N-1$, which gives $2\alpha+(N-1)/N\le 1$. This is equivalent to \eqref{eq:no_wrap_condition_alpha}. In this regime, each wrap count is constant over the whole block, and all PC-AFDM basis functions are continuous.

For $\alpha>1/(2N)$, internal wrapping occurs. In particular, $r=1$ belongs to $\mathcal{R}_{N-1}$, and the corresponding wrapping instant is $t_{N-1,1}=1/(2\alpha B)$. If all basis functions are continuous, the jump at this instant must vanish. By \eqref{eq:jump_def}, this requires $e^{-j2\pi Bt_{N-1,1}}=1$, or $1/(2\alpha)\in\mathbb{Z}$. Thus $\alpha=1/(2k)$ for some $k\in\mathbb{Z}_{>0}$.

Conversely, if $\alpha=1/(2k)$, then $Bt_{m,r}=(Nr-m)/(2\alpha)=k(Nr-m)\in\mathbb{Z}$ for any admissible internal wrapping instant. Therefore, every internal jump in \eqref{eq:jump_def} vanishes. This proves the necessity and sufficiency of \eqref{eq:no_wrap_condition_alpha} and \eqref{eq:special_alpha_condition}.
\end{proof}

Theorem~\ref{thm:pc_continuity_criterion} shows that the PC-AFDM basis functions are continuous only in special parameter regimes. Apart from the regime with no wrapping and the discrete values $\alpha=1/(2k)$, at least one subcarrier contains a nonzero internal jump in the complex envelope.

\subsection{High-Frequency Spectral Tail Analysis}
\label{subsec:high_frequency_tail_analysis}

We now connect the time-domain jumps to the spectral tail. For a finite duration piecewise smooth waveform, integration by parts shows that endpoints and jumps contribute the leading $1/f$ terms to the spectrum. The smooth parts of the waveform only affect higher-order terms. For fixed $N$, $B$, and $\alpha$, each subcarrier waveform is smooth on a finite number of intervals. A finite block always contributes endpoint terms at $t=0$ and $t=T$, whereas PC-AFDM may additionally contribute internal jump terms.

\begin{lem}[High-frequency expansion]
\label{lem:high_frequency_expansion}
As $|f|\to\infty$, the subcarrier spectra satisfy
\begin{equation}
\begin{aligned}
G_m^{(\mathrm{pc})}(f)
&=\frac{A_m^{(\mathrm{pc})}(f)}{j2\pi f}+O(f^{-2}),\\
G_m^{(\mathrm{step})}(f)
&=\frac{A_m^{(\mathrm{step})}(f)}{j2\pi f}+O(f^{-2}).
\end{aligned}
\label{eq:pc_high_freq_expansion}
\end{equation}
where
\begin{equation}
\begin{aligned}
A_m^{(\mathrm{pc})}(f)
&=g_m^{(\mathrm{pc})}(0^{+})
-g_m^{(\mathrm{pc})}(T^{-})e^{-j2\pi fT}\\
&\quad+\sum_{r\in\mathcal{R}_m}\Delta g_{m,r}e^{-j2\pi ft_{m,r}},
\end{aligned}
\label{eq:a_pc_def}
\end{equation}
and
\begin{equation}
A_m^{(\mathrm{step})}(f)
=g_m^{(\mathrm{step})}(0^{+})
-g_m^{(\mathrm{step})}(T^{-})e^{-j2\pi fT}.
\label{eq:a_step_def}
\end{equation}
\end{lem}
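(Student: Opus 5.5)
Split $[0,T)$ at the discontinuities of the basis function and integrate by parts once on each smooth piece. For PC-AFDM, order the internal wrapping instants $t_{m,r}$, $r\in\mathcal{R}_m$, from Lemma~\ref{lem:internal_wrapping_geometry}. Together with $0$ and $T$ they partition the block into $J_m+1$ open subintervals, and on each of these $q_m^{(\mathrm{pc})}(t)$ is constant. There $g_m^{(\mathrm{pc})}$ is $C^\infty$, with derivative $j2\pi f_m^{(\mathrm{pc})}(t)g_m^{(\mathrm{pc})}(t)$. Since $|f_m^{(\mathrm{pc})}(t)|<B$, this derivative is bounded by $2\pi B$, and its own derivative is also bounded. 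For SFDM, partition at the sampling instants $t_n$. On each $I_n$, $g_m^{(\mathrm{step})}(t)$ is a pure exponential with derivative $j2\pi f_{m,n}^{(\mathrm{step})}g_m^{(\mathrm{step})}(t)$. By \eqref{eq:step_phase_recursion} it is continuous at every $t_n$, so it is continuous and piecewise $C^1$ on $[0,T)$, and its derivative jumps only at the $t_n$.

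On a generic piece $(a,b)$ with smooth $g$, write
\[
\int_a^b g(t)e^{-j2\pi ft}dt=\frac{g(a^+)e^{-j2\pi fa}-g(b^-)e^{-j2\pi fb}}{j2\pi f}+\frac{1}{j2\pi f}\int_a^b g'(t)e^{-j2\pi ft}dt .
\]
Sum over the pieces. The boundary terms telescope. At $t=0$ and $t=T$ they leave $g(0^+)-g(T^-)e^{-j2\pi fT}$. At each interior breakpoint $\tau$ they leave $\bigl(g(\tau^+)-g(\tau^-)\bigr)e^{-j2\pi f\tau}$. For PC-AFDM the interior breakpoints are the $t_{m,r}$, and the differences are exactly the jumps $\Delta g_{m,r}$ of \eqref{eq:jump_def}. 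This gives $A_m^{(\mathrm{pc})}(f)$ in \eqref{eq:a_pc_def}. For SFDM the interior differences are zero by phase continuity, which gives \eqref{eq:a_step_def}.

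It remains to show that each residual integral $\int_a^b g'(t)e^{-j2\pi ft}dt$ is $O(f^{-1})$, so that its product with $1/(j2\pi f)$ is $O(f^{-2})$. Integrate by parts once more on each piece. The boundary values of $g'$ are bounded by $2\pi B$ in modulus, and $g''$ is bounded on each piece, because on a PC-AFDM piece $g''=[(j2\pi f_m^{(\mathrm{pc})})^2+j2\pi K]g$ and on an SFDM piece $g''=(j2\pi f_{m,n}^{(\mathrm{step})})^2 g$. The residual is therefore at most $\bigl(2(J+1)\cdot 2\pi B+\int_0^T|g''|dt\bigr)/(2\pi|f|)$ in modulus, where $J$ is the finite number of pieces. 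Since $N$, $B$ and $\alpha$ are fixed, this bound is uniform in $f$.

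The main obstacle is the bookkeeping at the interior points rather than the asymptotics. I must make sure that the telescoped interior boundary terms are exactly $+\Delta g\,e^{-j2\pi f\tau}$, with the sign convention $g(\tau^+)-g(\tau^-)$ matching \eqref{eq:jump_def}. For PC-AFDM, a wrapping instant can coincide with a sampling instant, and a single wrapping instant can carry a wrap count change of more than one. Both cases are handled if the partition simply uses the distinct breakpoints. Lemma~\ref{lem:internal_wrapping_geometry} already shows that the instants $t_{m,r}$ are distinct for distinct $r$ and that each is a unit increment of $q_m^{(\mathrm{pc})}$, so the count of breakpoints is $J_m$.
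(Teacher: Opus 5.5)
Your proposal is correct and follows the paper's own argument. You integrate by parts on each smooth segment, let the interior boundary terms telescope into $\Delta g_{m,r}e^{-j2\pi f t_{m,r}}$ for PC-AFDM (and into zero for SFDM by phase continuity at the $t_n$), and integrate by parts once more so the remainder is $O(f^{-2})$, which you make explicit through the bounds on $g'$ and $g''$. One aside is off: a single wrapping instant cannot carry a wrap-count change larger than one for a fixed subcarrier, because $Kt+m/T$ is strictly increasing, as your final sentence itself acknowledges.
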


\begin{proof}
For PC-AFDM, partition $[0,T]$ by the ordered boundary sequence $\{\vartheta_{m,j}\}_{j=0}^{J_m+1}$. On each open segment, $g_m^{(\mathrm{pc})}(t)$ is smooth. Integration by parts gives
\[
\begin{aligned}
&\int_{\vartheta_{m,j}}^{\vartheta_{m,j+1}}
g_m^{(\mathrm{pc})}(t)e^{-j2\pi ft}dt\\
&=\frac{g_m^{(\mathrm{pc})}(\vartheta_{m,j}^{+})e^{-j2\pi f\vartheta_{m,j}}
-g_m^{(\mathrm{pc})}(\vartheta_{m,j+1}^{-})e^{-j2\pi f\vartheta_{m,j+1}}}{j2\pi f}\\
&\quad+\frac{1}{j2\pi f}\int_{\vartheta_{m,j}}^{\vartheta_{m,j+1}}
\frac{d g_m^{(\mathrm{pc})}(t)}{dt}e^{-j2\pi ft}dt.
\end{aligned}
\]
Summing over all segments telescopes the interior boundary terms except at the jump locations. This gives \eqref{eq:pc_high_freq_expansion} and \eqref{eq:a_pc_def}. The remaining integral is $O(f^{-2})$ after applying integration by parts once more on each smooth segment. For SFDM, the subcarrier is continuous over $[0,T)$ and smooth on each sampling interval, so the same argument produces only the endpoint terms in \eqref{eq:a_step_def}.
\end{proof}

The high-frequency expansion separates the common endpoint contribution from the additional internal jump contribution. Squaring the expansion and integrating over a high-frequency interval gives the following tail coefficients.

\begin{thm}[High-frequency spectral tail]
\label{thm:far_out_tail}
For each subcarrier $m$, as $F\to\infty$ with fixed waveform parameters,
\begin{equation}
\begin{aligned}
&\int_F^{\infty}|G_m^{(\mathrm{step})}(f)|^2df
= \frac{1}{2\pi^2F}+O(F^{-2}),\\
&\int_F^{\infty}|G_m^{(\mathrm{pc})}(f)|^2df \\
&\quad = \frac{1}{2\pi^2F}
+ \frac{1}{4\pi^2F}\sum_{r\in\mathcal{R}_m}|\Delta g_{m,r}|^2
+ O(F^{-2})\\
&\quad = \frac{1}{2\pi^2F}
+ \frac{1}{\pi^2F}\sum_{r\in\mathcal{R}_m}
\sin^2\left(\frac{\pi(Nr-m)}{2\alpha}\right)
+ O(F^{-2}).
\end{aligned}
\label{eq:subcarrier_tail}
\end{equation}
Averaging over the subcarriers gives the one-sided positive-frequency tails
\begin{equation}
\begin{aligned}
&\int_F^{\infty}\Phi^{(\mathrm{step})}(f)df
= \frac{1}{2\pi^2F} + O(F^{-2}), \\
&\int_F^{\infty}\Phi^{(\mathrm{pc})}(f)df
= \frac{1}{2\pi^2F} \\
&\qquad + \frac{1}{\pi^2NF}
\sum_{m=0}^{N-1}\sum_{r\in\mathcal{R}_m}
\sin^2\left(\frac{\pi(Nr-m)}{2\alpha}\right)
+ O(F^{-2}).
\end{aligned}
\label{eq:esd_tail}
\end{equation}
Furthermore, for the two-sided high-frequency tail $\mathcal{T}^{(\xi)}(F)=\int_{|f|>F}\Phi^{(\xi)}(f)df$, $F\ge B$, we have
\begin{equation}
\begin{aligned}
\mathcal{T}^{(\mathrm{step})}(F)
&=\frac{1}{\pi^2F}+O(F^{-2}),\\
\mathcal{T}^{(\mathrm{pc})}(F)
&=\frac{1}{\pi^2F}
+\frac{2}{\pi^2NF}\sum_{m=0}^{N-1}\sum_{r\in\mathcal{R}_m}
\sin^2\left(\frac{\pi(Nr-m)}{2\alpha}\right)\\
&\quad+O(F^{-2}).
\end{aligned}
\label{eq:two_sided_esd_tail}
\end{equation}
\end{thm}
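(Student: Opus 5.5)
The plan is to start from Lemma~\ref{lem:high_frequency_expansion}, square the leading term, and integrate over $[F,\infty)$. Write $G_m^{(\xi)}(f)=A_m^{(\xi)}(f)/(j2\pi f)+R_m^{(\xi)}(f)$ with $R_m^{(\xi)}(f)=O(f^{-2})$. Then
\[
|G_m^{(\xi)}(f)|^2=\frac{|A_m^{(\xi)}(f)|^2}{4\pi^2f^2}+O(f^{-3}),
\]
because $A_m^{(\xi)}$ is bounded: it is a finite sum of unimodular terms, with at most $J_m+2$ of them. The cross term is therefore $O(f^{-1})\cdot O(f^{-2})$. Integrating the remainder over $[F,\infty)$ gives $O(F^{-2})$, so everything reduces to evaluating $\int_F^\infty |A_m^{(\xi)}(f)|^2 f^{-2}\,df$.

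To do this, expand $|A_m^{(\xi)}(f)|^2$ as a trigonometric polynomial in $f$. I write $A_m^{(\xi)}(f)=\sum_k a_k e^{-j2\pi f\tau_k}$, where the time points $\tau_k$ are distinct: $0$, $T$, and the wrapping instants $t_{m,r}\in(0,T)$. Then
\[
|A_m^{(\xi)}(f)|^2=\sum_k|a_k|^2+\sum_{k\ne l}a_k\bar a_l e^{-j2\pi f(\tau_k-\tau_l)}.
\]
The constant part consists of the endpoint terms, each of modulus one and contributing $1+1=2$, plus the jump terms $\sum_{r\in\mathcal R_m}|\Delta g_{m,r}|^2$. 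Integrating $f^{-2}$ over $[F,\infty)$ gives $1/F$. So the constant part yields $\frac{2}{4\pi^2F}=\frac{1}{2\pi^2F}$ together with $\frac{1}{4\pi^2F}\sum_r|\Delta g_{m,r}|^2$.

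Substituting \eqref{eq:jump_magnitude} from Lemma~\ref{lem:internal_wrapping_geometry} gives the $\sin^2$ form. For SFDM only the two endpoint terms appear, which gives the first line of \eqref{eq:subcarrier_tail}.

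The main obstacle is showing that the oscillatory cross terms contribute only $O(F^{-2})$. For each $\delta=\tau_k-\tau_l\neq0$ I will integrate by parts once:
\[
\int_F^\infty \frac{e^{-j2\pi f\delta}}{f^2}\,df=\frac{e^{-j2\pi F\delta}}{j2\pi\delta F^2}-\frac{1}{j\pi\delta}\int_F^\infty \frac{e^{-j2\pi f\delta}}{f^3}\,df=O\!\left(\frac{1}{|\delta|F^2}\right).
\]
Since there are finitely many pairs with fixed nonzero separations, the total is $O(F^{-2})$. The constant depends on the parameters, which is allowed because they are fixed. The one point to check is that the $\tau_k$ are genuinely distinct. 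This holds because the $t_{m,r}$ are strictly increasing in $r$ and lie strictly inside $(0,T)$ by the definition of $\mathcal R_m$.

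Averaging \eqref{eq:subcarrier_tail} over $m$ with the weight $1/N$, as in \eqref{eq:average_esd_def}, gives \eqref{eq:esd_tail} directly. For the two-sided tail, I will repeat the argument on $(-\infty,-F]$. Lemma~\ref{lem:high_frequency_expansion} holds as $|f|\to\infty$, and $|A|^2/f^2$ integrates to the same constant $1/F$ over $(-\infty,-F]$. The cross terms are handled by the same integration-by-parts bound. Doubling the one-sided result then gives \eqref{eq:two_sided_esd_tail}. The restriction $F\ge B$ only ensures that the tail lies outside the nominal band and plays no role in the asymptotics.
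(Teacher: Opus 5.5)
Your proposal is correct and follows the paper's proof essentially step for step. You square the expansion of Lemma~\ref{lem:high_frequency_expansion}, keep only the diagonal terms of $|A_m^{(\xi)}(f)|^2$ (endpoints contributing $2$, jumps contributing $\sum_{r\in\mathcal{R}_m}|\Delta g_{m,r}|^2$), remove the oscillatory cross terms by one integration by parts, average over $m$, and repeat on $(-\infty,-F]$; your explicit checks that $A_m^{(\xi)}$ is bounded and that the points $0$, $T$, $t_{m,r}$ are distinct make precise what the paper leaves implicit.
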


\begin{proof}
By Lemma~\ref{lem:high_frequency_expansion},
\[
|G_m^{(\xi)}(f)|^2
=\frac{|A_m^{(\xi)}(f)|^2}{4\pi^2f^2}+O(f^{-3}),
\quad \xi\in\{\mathrm{pc},\mathrm{step}\}.
\]
For any fixed $\tau\neq0$, a single integration by parts gives
\[
\int_F^{\infty}\frac{e^{-j2\pi\tau f}}{f^2}df=O(F^{-2}),
\quad F\to\infty.
\]
Therefore, after integrating over $[F,\infty)$, only the squared magnitudes of the diagonal terms in $|A_m^{(\xi)}(f)|^2$ contribute to the $1/F$ coefficient. For SFDM, the two endpoint coefficients have unit magnitude, giving the first line of \eqref{eq:subcarrier_tail}. For PC-AFDM, the endpoint contribution is again $2$, and the internal jumps contribute $\sum_{r\in\mathcal{R}_m}|\Delta g_{m,r}|^2$. This proves the second line of \eqref{eq:subcarrier_tail}. Substituting \eqref{eq:jump_magnitude} gives the explicit sine form. Averaging over $m$ via \eqref{eq:average_esd_def} gives \eqref{eq:esd_tail}.

The same argument applies to the negative frequency interval $(-\infty,-F]$. Its $1/F$ coefficient is identical, while the oscillatory cross terms remain $O(F^{-2})$. Adding the positive and negative frequency tails gives \eqref{eq:two_sided_esd_tail}.
\end{proof}

Theorem~\ref{thm:far_out_tail} identifies the source of the asymptotic difference between the high-frequency spectral tails of PC-AFDM and SFDM. The common term comes from finite block truncation, while the additional term in PC-AFDM comes from internal jumps in the complex envelope. Three regimes follow directly:
\begin{itemize}[leftmargin=0.5cm]
\item If $\alpha\le 1/(2N)$, no internal wrapping occurs, and the high-frequency tail coefficients of PC-AFDM and SFDM coincide.
\item If $\alpha=1/(2k)$ for some $k\in\mathbb{Z}_{>0}$, internal wrapping may occur, but every internal jump vanishes. The high-frequency tail coefficients again coincide.
\item For generic values of $\alpha$, at least one internal jump is nonzero, and the PC-AFDM high-frequency tail coefficient is strictly larger than that of SFDM.
\end{itemize}

We emphasize that Theorem~\ref{thm:far_out_tail} characterizes the high-frequency tail only. It does not imply that the full OOBE integral is determined solely by the jump term, because the full OOBE also includes the spectrum near the nominal band. The next subsection therefore derives exact continuous-time spectra for evaluating the full OOBE.

\subsection{Exact Spectral Evaluation of OOBE}
\label{subsec:exact_oobe_evaluation}

The full OOBE cannot be obtained from the high-frequency tail coefficient alone. It also depends on the spectrum near the nominal band, where the asymptotic expansion is not sufficient. For this reason, we evaluate the finite block spectra \cite{oppenheim1997signals} exactly. For SFDM, the phase is affine on each sampling interval, so the spectrum is a sum of sinc terms. For PC-AFDM, the phase is quadratic on each interval with a fixed wrap count, so the spectrum is a sum of Fresnel integral terms.

\begin{prop}[Exact subcarrier spectra]
\label{prop:exact_subcarrier_spectra}
For SFDM, the $m$-th subcarrier spectrum is
\begin{equation}
\begin{aligned}
G_m^{(\mathrm{step})}(f)
&=\frac{1}{B}\sum_{n=0}^{N-1}
\exp\left[j2\pi\left(\phi_{m,n}^{(0)}-\frac{fn}{B}\right)\right]\\
&\quad\times \exp\left[j\pi\frac{f_{m,n}^{(\mathrm{step})}-f}{B}\right]
\operatorname{sinc}\left(\frac{f_{m,n}^{(\mathrm{step})}-f}{B}\right),
\end{aligned}
\label{eq:step_spectrum_sinc}
\end{equation}
where $\phi_{m,n}^{(0)}=\phi_m^{(\mathrm{step})}(n/B)$ and $\operatorname{sinc}(x)=\sin(\pi x)/(\pi x)$.

For PC-AFDM, let $0=\vartheta_{m,0}<\vartheta_{m,1}<\cdots<\vartheta_{m,J_m}<\vartheta_{m,J_m+1}=T$ denote the ordered boundary sequence formed by the internal wrapping instants of the $m$-th subcarrier, and let $q_{m,j}$ be the constant wrap count on $[\vartheta_{m,j},\vartheta_{m,j+1})$. Then
\begin{equation}
\begin{aligned}
G_m^{(\mathrm{pc})}(f)
&=\sum_{j=0}^{J_m}\int_{\vartheta_{m,j}}^{\vartheta_{m,j+1}}
\exp\Bigg\{j2\pi\Bigg[\frac{K}{2}t^2\\
&\qquad\qquad+\left(\frac{m}{T}-q_{m,j}B-f\right)t\Bigg]\Bigg\}dt.
\end{aligned}
\label{eq:pc_spectrum_integral}
\end{equation}
For $K>0$, this is equivalently
\begin{equation}
\begin{aligned}
G_m^{(\mathrm{pc})}(f)
&=\frac{1}{\sqrt{2K}}\sum_{j=0}^{J_m}
\exp\left(-j\pi\frac{\beta_{m,j}^2(f)}{K}\right)\\
&\quad\times\left[
\mathcal{F}_{\mathrm{Fr}}\left(u_{m,j+1}(f)\right)
-\mathcal{F}_{\mathrm{Fr}}\left(u_{m,j}(f)\right)\right],
\end{aligned}
\label{eq:pc_spectrum_fresnel}
\end{equation}
where
\[
\begin{aligned}
\beta_{m,j}(f) &= \frac{m}{T}-q_{m,j}B-f,\\
u_{m,j}(f) &= \sqrt{2K}\left(\vartheta_{m,j}+\frac{\beta_{m,j}(f)}{K}\right),
\end{aligned}
\]
and $\mathcal{F}_{\mathrm{Fr}}(u)=\int_0^u e^{j\pi v^2/2}dv$ is the Fresnel integral \cite{bracewell1986fourier}.
\end{prop}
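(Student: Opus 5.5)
The plan is to compute both spectra directly from their definitions, splitting the integration interval at the points where each phase is smooth, and then evaluating the elementary integral on each piece.

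\emph{SFDM.} Split $G_m^{(\mathrm{step})}(f)=\int_0^T g_m^{(\mathrm{step})}(t)e^{-j2\pi ft}dt$ into the $N$ sampling intervals $I_n$. On $I_n$, \eqref{eq:step_phase_piecewise} makes the phase affine, $\phi_{m,n}^{(0)}+f_{m,n}^{(\mathrm{step})}(t-n/B)$. Substitute $t=n/B+\tau$ with $\tau\in[0,1/B)$. The integrand becomes
\[
e^{j2\pi(\phi_{m,n}^{(0)}-fn/B)}\,e^{j2\pi(f_{m,n}^{(\mathrm{step})}-f)\tau}.
\]
Write $\delta=f_{m,n}^{(\mathrm{step})}-f$. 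The elementary integral is
\[
\int_0^{1/B}e^{j2\pi\delta\tau}d\tau=\frac{e^{j2\pi\delta/B}-1}{j2\pi\delta}.
\]
Factor out $e^{j\pi\delta/B}$ and use $e^{jx}-e^{-jx}=2j\sin x$. This gives $\frac{1}{B}e^{j\pi\delta/B}\operatorname{sinc}(\delta/B)$, with the removable case $\delta=0$ covered by $\operatorname{sinc}(0)=1$. Summing over $n$ yields \eqref{eq:step_spectrum_sinc}. The sum is finite, so there are no convergence issues.

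\emph{PC-AFDM, integral form.} By Lemma~\ref{lem:internal_wrapping_geometry}, the wrap count $q_m^{(\mathrm{pc})}(t)$ is constant on each $[\vartheta_{m,j},\vartheta_{m,j+1})$, with value $q_{m,j}$. Insert \eqref{eq:pc_phase_def} with this constant and combine with $e^{-j2\pi ft}$; this gives \eqref{eq:pc_spectrum_integral} immediately. Values on the finitely many boundary points do not matter.

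\emph{PC-AFDM, Fresnel form.} Complete the square in each segment:
\[
\frac{K}{2}t^2+\beta t=\frac{K}{2}\left(t+\frac{\beta}{K}\right)^2-\frac{\beta^2}{2K}.
\]
The constant term produces the prefactor $e^{-j\pi\beta_{m,j}^2/K}$. Substitute $v=\sqrt{2K}(t+\beta/K)$, which requires $K>0$ so that the square root is real. Then $\pi K(t+\beta/K)^2=\pi v^2/2$ and $dt=dv/\sqrt{2K}$. The segment integral becomes $\frac{1}{\sqrt{2K}}\int e^{j\pi v^2/2}dv$, taken between the images of the endpoints $\vartheta_{m,j}$ and $\vartheta_{m,j+1}$.

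\emph{Main obstacle: the upper limit.} The lower limit is exactly $u_{m,j}(f)$. The upper limit, however, is $\sqrt{2K}(\vartheta_{m,j+1}+\beta_{m,j}/K)$, which uses $\beta_{m,j}$ rather than $\beta_{m,j+1}$. I therefore expect the notation $u_{m,j+1}(f)$ in \eqref{eq:pc_spectrum_fresnel} to mean the upper endpoint evaluated with the segment-$j$ offset $\beta_{m,j}$. I would verify this by checking whether the two versions actually differ. Since $q_{m,j+1}=q_{m,j}+1$, we have $\beta_{m,j+1}=\beta_{m,j}-B$. The two candidate upper limits then differ by $\sqrt{2K}\,B/K$, which is generally nonzero. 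So the correct statement must be read with the segment-$j$ offset in both limits, and I would record this interpretation explicitly in the proof.

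With that reading fixed, the identity follows by writing the Fresnel integral between the two limits as $\mathcal{F}_{\mathrm{Fr}}(\text{upper})-\mathcal{F}_{\mathrm{Fr}}(\text{lower})$ and summing over $j$.
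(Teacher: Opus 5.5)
Your proposal is correct and follows the paper's proof essentially step by step. For SFDM, both split over $I_n$, shift $t=n/B+t'$, and apply the same sinc identity. For PC-AFDM, both substitute the constant wrap count, complete the square, and apply $u=\sqrt{2K}(t+\beta_{m,j}(f)/K)$ on each segment.

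Your reading of the upper Fresnel limit as $\sqrt{2K}\,(\vartheta_{m,j+1}+\beta_{m,j}(f)/K)$ is right. It is exactly what that substitution produces, and the paper passes over this silently. The literal $u_{m,j+1}(f)$ cannot be intended: it is undefined for $j=J_m$, and for $1\le j\le J_m-1$ it equals $u_{m,j}(f)=-f\sqrt{2/K}$, which would wrongly make every interior-segment term vanish.
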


\begin{proof}
For SFDM, the phase is affine on $I_n$, namely
\[
\phi_m^{(\mathrm{step})}(t)
=\phi_{m,n}^{(0)}+f_{m,n}^{(\mathrm{step})}\left(t-\frac{n}{B}\right).
\]
Substituting this expression into $G_m^{(\mathrm{step})}(f)$ gives
\[
\begin{aligned}
G_m^{(\mathrm{step})}(f)
&=\sum_{n=0}^{N-1}\int_{n/B}^{(n+1)/B}
\exp\Bigg\{j2\pi\Bigg[\phi_{m,n}^{(0)}\\
&\quad+f_{m,n}^{(\mathrm{step})}\left(t-\frac{n}{B}\right)-ft\Bigg]\Bigg\}dt.
\end{aligned}
\]
Using $t=n/B+t'$ yields
\[
\begin{aligned}
G_m^{(\mathrm{step})}(f)
&=\sum_{n=0}^{N-1}\exp\left[j2\pi\left(\phi_{m,n}^{(0)}-\frac{fn}{B}\right)\right]\\
&\quad\times\int_0^{1/B}e^{j2\pi(f_{m,n}^{(\mathrm{step})}-f)t'}dt'.
\end{aligned}
\]
The identity
\[
\int_0^{1/B}e^{j2\pi at'}dt'
=\frac{1}{B}e^{j\pi a/B}\operatorname{sinc}\left(\frac{a}{B}\right)
\]
proves \eqref{eq:step_spectrum_sinc}.

For PC-AFDM, the wrap count is constant on $[\vartheta_{m,j},\vartheta_{m,j+1})$, so $\phi_m^{(\mathrm{pc})}(t)=Kt^2/2+(m/T-q_{m,j}B)t$. Substitution gives \eqref{eq:pc_spectrum_integral}. With $\beta_{m,j}(f)=m/T-q_{m,j}B-f$, completing the square gives
\[
\frac{K}{2}t^2+\beta_{m,j}(f)t
=\frac{K}{2}\left(t+\frac{\beta_{m,j}(f)}{K}\right)^2
-\frac{\beta_{m,j}^2(f)}{2K}.
\]
Applying $u=\sqrt{2K}(t+\beta_{m,j}(f)/K)$ to each segment gives \eqref{eq:pc_spectrum_fresnel}.
\end{proof}

The exact forms in Proposition~\ref{prop:exact_subcarrier_spectra} are used to evaluate the ESD and the full OOBE without relying on a sampled time-domain approximation.
\section{Receiver Sampling Under Continuous-Time Propagation}
\label{sec:sampling_postprocessing}

Sections~\ref{sec:ct_realizations} and \ref{sec:discontinuity_spectral_leakage} show that PC-AFDM and SFDM have the same values at the sampling instants but different continuous-time trajectories. This section studies the same distinction after continuous-time propagation and sampling at the receiver \cite{Wu2011OversampledOFDM}. The main result is that the two realizations induce the same sampled channel matrix when all path delays are integer multiples of the sampling period, but generally induce different sampled channel matrices under fractional delays.

\subsection{Continuous-Time Channel and Receiver Sampling}
\label{subsec:ct_channel_sampling}

Let $s^{(\xi)}(t)$, $\xi\in\{\mathrm{pc},\mathrm{step}\}$, denote the continuous-time block over $0\le t<T$. Before transmission, a chirp periodic prefix (CPP) \cite{cao2025agile} of duration $T_{\mathrm{cpp}}$ is prepended:
\begin{equation}
s_{\mathrm{tx}}^{(\xi)}(t)=
\begin{cases}
s^{(\xi)}(t+T)e^{-j2\pi c_1N(N+2Bt)}, & -T_{\mathrm{cpp}}\le t<0,\\
s^{(\xi)}(t), & 0\le t<T.
\end{cases}
\label{eq:ct_cpp_def}
\end{equation}
The prefix duration is assumed to exceed the maximum path delay. The received signal over a continuous-time doubly selective channel with $L$ paths is
\begin{equation*}
r^{(\xi)}(t)=\sum_{\ell=1}^{L}h_{\ell}s_{\mathrm{tx}}^{(\xi)}(t-\tau_{\ell})e^{j2\pi\nu_{\ell}t}+w(t),
\label{eq:ct_ltv_channel}
\end{equation*}
where $h_{\ell}$, $\tau_{\ell}$, and $\nu_{\ell}$ are the path gain, delay, and Doppler shift of the $\ell$-th path, and $w(t)$ denotes additive noise. After CPP removal, the receiver samples $r^{(\xi)}(t)$ at $t_n=n/B$.

Define the $m$-th subcarrier waveform on the data block as
\[
u_m^{(\xi)}(t)=\frac{1}{\sqrt{N}}e^{j2\pi(c_2m^2+\phi_m^{(\xi)}(t))},
\quad 0\le t<T.
\]
The corresponding waveform after CPP insertion is
\[
u_{m,\mathrm{tx}}^{(\xi)}(t)=
\begin{cases}
u_m^{(\xi)}(t+T)e^{-j2\pi c_1N(N+2Bt)}, & -T_{\mathrm{cpp}}\le t<0,\\
u_m^{(\xi)}(t), & 0\le t<T.
\end{cases}
\]
Thus, $s_{\mathrm{tx}}^{(\xi)}(t)=\sum_{m=0}^{N-1}x[m]u_{m,\mathrm{tx}}^{(\xi)}(t)$. The sampled noiseless input-output relation is
\begin{equation*}
\bar{\bm r}^{(\xi)}=\bm H_{\xi}\bm x,
\label{eq:sampled_channel_matrix_relation}
\end{equation*}
where $\bar{\bm r}^{(\xi)}=[\bar r^{(\xi)}[0],\ldots,\bar r^{(\xi)}[N-1]]^T$ and
\begin{equation}
[\bm H_{\xi}]_{n,m}
=\sum_{\ell=1}^{L}h_{\ell}e^{j2\pi\nu_{\ell}t_n}
u_{m,\mathrm{tx}}^{(\xi)}(t_n-\tau_{\ell}).
\label{eq:effective_sampled_channel_matrix}
\end{equation}
Thus, the sampled channel matrix depends on the continuous-time realization through the delayed waveform values in \eqref{eq:effective_sampled_channel_matrix}. Define $\Delta\bm H=\bm H_{\mathrm{step}}-\bm H_{\mathrm{pc}}$. Its entries are
\begin{equation}
\begin{aligned}
{}[\Delta\bm H]_{n,m}
&=\sum_{\ell=1}^{L}h_{\ell}e^{j2\pi\nu_{\ell}t_n}\\
&\quad\times
\left[
u_{m,\mathrm{tx}}^{(\mathrm{step})}(t_n-\tau_{\ell})
-u_{m,\mathrm{tx}}^{(\mathrm{pc})}(t_n-\tau_{\ell})
\right].
\end{aligned}
\label{eq:delta_H_entries}
\end{equation}
Equation~\eqref{eq:delta_H_entries} shows that equality at the sampling instants before propagation does not automatically imply equality after continuous-time propagation.

\subsection{Sampling Equivalence and Fractional Delay}
\label{subsec:receiver_equivalence_fractional_delay}

\begin{prop}[Sampling equivalence under integer delays]
\label{prop:receiver_grid_equivalence}
Write each path delay as $\tau_{\ell}=(d_{\ell}+\epsilon_{\ell})/B$, where ${d_{\ell}\in\mathbb{Z}_{\ge 0}}$ and $\epsilon_{\ell}\in[0,1)$. Suppose every path delay is aligned with the sampling grid, i.e., $\epsilon_{\ell}=0$ for all $\ell$, and the CPP is long enough to cover all delays. Then
\begin{equation}
\Delta\bm H=\bm 0.
\label{eq:receiver_equivalence_integer_delay}
\end{equation}
If at least one path has a fractional delay, i.e., $\epsilon_{\ell}\in(0,1)$ for at least one $\ell$, then
\begin{equation}
\begin{aligned}
{}[\Delta\bm H]_{n,m}
&=\sum_{\ell=1}^{L}h_{\ell}e^{j2\pi\nu_{\ell}t_n}\\
&\hspace{-1em}\times
\left[
u_{m,\mathrm{tx}}^{(\mathrm{step})}\left(\frac{n-d_{\ell}-\epsilon_{\ell}}{B}\right)
-u_{m,\mathrm{tx}}^{(\mathrm{pc})}\left(\frac{n-d_{\ell}-\epsilon_{\ell}}{B}\right)
\right],
\end{aligned}
\label{eq:fractional_delay_mismatch_matrix}
\end{equation}
which is generally nonzero.
\end{prop}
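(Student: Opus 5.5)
The plan is to work entrywise with \eqref{eq:delta_H_entries}. With $\tau_\ell=(d_\ell+\epsilon_\ell)/B$, the delayed argument is $t_n-\tau_\ell=(n-d_\ell-\epsilon_\ell)/B$. For the integer-delay case ($\epsilon_\ell=0$) this argument equals $(n-d_\ell)/B$, which is a point on the sampling grid shifted by an integer. It suffices to show that
\[
u_{m,\mathrm{tx}}^{(\mathrm{step})}\left(\frac{k}{B}\right)=u_{m,\mathrm{tx}}^{(\mathrm{pc})}\left(\frac{k}{B}\right)
\]
for every integer $k$ with $-BT_{\mathrm{cpp}}\le k/B<T$. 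Each bracket in \eqref{eq:delta_H_entries} then vanishes, which gives \eqref{eq:receiver_equivalence_integer_delay}.

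I would split this into two cases according to the sign of $k$.

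\emph{Case $0\le k\le N-1$.} The argument lies in the data block, so $u_{m,\mathrm{tx}}^{(\xi)}(k/B)=u_m^{(\xi)}(k/B)$. Theorem~\ref{thm:sfdm_sample_equivalence} gives the SFDM value and \eqref{eq:pc_sample_equivalence} gives the PC-AFDM value, and both equal $\frac{1}{\sqrt N}e^{j2\pi(c_2m^2+c_1k^2+mk/N)}$.

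\emph{Case $k<0$.} The argument lies in the CPP. The definition of $u_{m,\mathrm{tx}}^{(\xi)}$ gives
\[
u_{m,\mathrm{tx}}^{(\xi)}(k/B)=u_m^{(\xi)}((k+N)/B)\,e^{-j2\pi c_1N(N+2k)}.
\]
The CPP factor is common to both realizations. The point $(k+N)/B$ is again a grid instant in $[0,T)$, because the CPP does not exceed the block length and $k\ge -N$. The previous case therefore applies, and the two values coincide. This is the only place where the assumption that the CPP covers all delays is used: it guarantees that every delayed sampling point falls either in the block or in the prefix, where the definition above holds.

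For the fractional case, substituting $t_n-\tau_\ell=(n-d_\ell-\epsilon_\ell)/B$ into \eqref{eq:delta_H_entries} gives \eqref{eq:fractional_delay_mismatch_matrix} directly. The real content is the claim that this is generally nonzero. I would prove it by exhibiting a single instance.

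\begin{itemize}
\item Take one path with $h_1=1$, $\nu_1=0$, $d_1=0$, $\epsilon\in(0,1)$, and a sample index $n\ge1$, so that the argument $t=(n-\epsilon)/B$ lies in the interior of $I_{n-1}$.
\item If PC-AFDM has no wrap on $I_{n-1}$, Proposition~\ref{prop:local_pc_sfdm_relation} gives
\[
u_m^{(\mathrm{pc})}(t)/u_m^{(\mathrm{step})}(t)=e^{j2\pi(C_{m,n-1}+\frac K2(t-t_{c,n-1})^2)}.
\]
\item Agreement at the left endpoint $(n-1)/B$ forces $C_{m,n-1}=-\frac K2\frac{1}{4B^2}$ modulo integers. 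Hence the ratio at $t$ equals $e^{j\pi K[(t-t_{c})^2-1/(4B^2)]}$.
\item This ratio differs from $1$ whenever $\epsilon\ne 0,1$ and $K[(1/2-\epsilon)^2-1/4]/B^2\notin 2\mathbb Z$. The bracket is strictly negative with magnitude at most $\alpha/N$, so for small $\alpha/N$ the ratio is certainly not $1$. This yields $[\Delta\bm H]_{n,m}\ne 0$.
\item Alternatively, for generic $\alpha$ one can choose a delayed point just past a wrapping instant, where Lemma~\ref{lem:internal_wrapping_geometry} gives a jump of order one.
\end{itemize}

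The main obstacle is keeping the bookkeeping consistent at the block boundary. At the argument $k=-N$ the CPP formula and periodicity interact, so I expect to check that $t\ge -T_{\mathrm{cpp}}$ ensures $k+N\ge 0$. I also need to state "generally nonzero" precisely as the existence of parameters for which the entry is nonzero, rather than as a claim about all parameters. Note that for $\alpha\le 1/(2N)$ with a special $\epsilon$, isolated coincidences are possible.
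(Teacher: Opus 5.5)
Your proposal is correct and follows the paper's proof. With $\epsilon_\ell=0$, every delayed argument is a grid point of the data block or of the CPP, where the shared IDAFT samples and the common CPP phase factor make each bracket in \eqref{eq:delta_H_entries} vanish, and the fractional-delay formula is direct substitution. Your explicit single-path witness combines Proposition~\ref{prop:local_pc_sfdm_relation} with agreement at $t_{n-1}$ to get the ratio $e^{-j2\pi\alpha\epsilon(1-\epsilon)/N}\neq 1$, which makes ``generally nonzero'' precise where the paper only argues qualitatively that the inter-sample trajectories differ.
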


\begin{proof}
When $\epsilon_{\ell}=0$, the argument $t_n-\tau_{\ell}=(n-d_{\ell})/B$ lies on the sampling grid of the data block or the CPP. By \eqref{eq:matrix_equivalence}, PC-AFDM and SFDM have identical samples on the data block. The CPP construction in \eqref{eq:ct_cpp_def} applies the same deterministic phase factor to both realizations, so their CPP samples are also identical at the sampling instants. Hence,
\[
u_{m,\mathrm{tx}}^{(\mathrm{step})}(t_n-\tau_{\ell})
=u_{m,\mathrm{tx}}^{(\mathrm{pc})}(t_n-\tau_{\ell})
\]
for all $m$, $n$, and $\ell$, which proves \eqref{eq:receiver_equivalence_integer_delay}.

For a fractional delay, the argument $(n-d_{\ell}-\epsilon_{\ell})/B$ is not a sampling instant. The received sample then depends on the waveform between sampling instants, and \eqref{eq:fractional_delay_mismatch_matrix} follows from \eqref{eq:delta_H_entries}. Since PC-AFDM and SFDM differ between sampling instants in general, the sampled channel matrices are generally different.
\end{proof}

Proposition~\ref{prop:receiver_grid_equivalence} shows that equality at the transmitter samples is sufficient for receiver equivalence only when all propagation delays remain on the sampling grid. With fractional delays, the receiver observes the values of the delayed waveform between sampling instants. The resulting mismatch depends on the fractional delay locations. By Proposition~\ref{prop:local_pc_sfdm_relation}, the difference is small on intervals where no PC-AFDM wrapping occurs. Near an internal wrapping instant, the difference can be dominated by the PC-AFDM jump term.

\subsection{Receiver Mismatch Metric}
\label{subsec:receiver_mismatch_metric}

After standard AFDM demodulation, the DAF-domain noiseless received vector is $\bar{\bm z}^{(\xi)}=\bm A_{\mathrm{IDAFT}}^H\bar{\bm r}^{(\xi)}=\bm G_{\xi}\bm x$, where $\bm G_{\xi}=\bm A_{\mathrm{IDAFT}}^H\bm H_{\xi}$. The DAF-domain mismatch is $\Delta\bm G=\bm G_{\mathrm{step}}-\bm G_{\mathrm{pc}}=\bm A_{\mathrm{IDAFT}}^H\Delta\bm H$. Since $\bm A_{\mathrm{IDAFT}}$ is unitary, $\|\Delta\bm G\|_{\mathrm F}^2=\|\Delta\bm H\|_{\mathrm F}^2$. For $\mathbb{E}[\bm x\bm x^H]=\bm I$, the mean square mismatch between the two noiseless DAF-domain received vectors is
\[
\mathbb{E}_{\bm x}\left[\|\bar{\bm z}^{(\mathrm{step})}-\bar{\bm z}^{(\mathrm{pc})}\|^2\right]
=\|\Delta\bm H\|_{\mathrm F}^2.
\]
Accordingly, we use the following normalized channel matrix mismatch:
\begin{equation}
\mathrm{NMSE}_{\mathrm{ch}}
=
\frac{\|\Delta\bm H\|_{\mathrm F}^2}{\|\bm H_{\mathrm{pc}}\|_{\mathrm F}^2}
=
\frac{\|\Delta\bm G\|_{\mathrm F}^2}{\|\bm G_{\mathrm{pc}}\|_{\mathrm F}^2}.
\label{eq:channel_nmse_def}
\end{equation}
This metric measures the normalized mismatch between the channel matrices induced by the two continuous-time realizations.

\subsection{Windowing Baseline}
\label{subsec:windowing_baseline}

Windowing is a common method for reducing spectral leakage. In the present problem, it should be interpreted as a postprocessing baseline rather than as a continuous-time realization of the discrete AFDM block. Let $\omega(t)$ be a continuous and piecewise continuously differentiable window over $[0,T]$, and define ${\widetilde g_m^{(\mathrm{pc})}(t)=\omega(t)g_m^{(\mathrm{pc})}(t)}$. At an internal PC-AFDM wrapping instant $t_{m,r}$, the jump of the windowed basis waveform is
\[
{\Delta \widetilde g_{m,r}=\omega(t_{m,r})\Delta g_{m,r}.}
\]
Thus, a continuous window scales the internal jump by the window value at the wrapping instant. It does not remove a nonzero internal jump unless $\omega(t_{m,r})=0$.

The same relation appears in the high-frequency tail. Define
\[
{
C_{m,\omega}=|\omega(0^{+})|^2+|\omega(T^{-})|^2
+\sum_{r\in\mathcal{R}_m}|\omega(t_{m,r})|^2|\Delta g_{m,r}|^2.
}
\]
Using the same integration by parts argument as in Lemma~\ref{lem:high_frequency_expansion}, {for the positive-frequency tail}, we have
\[
\int_F^{\infty}|\widetilde G_m^{(\mathrm{pc})}(f)|^2df
=\frac{{C_{m,\omega}}}{4\pi^2F}+O(F^{-2}),
\quad F\to\infty.
\]
The endpoint terms are weighted by ${\omega(0^{+})}$ and ${\omega(T^{-})}$, while the internal jump terms are weighted by ${\omega(t_{m,r})}$.

Windowing can also change the discrete AFDM block. Let ${\omega_n=\omega(n/B)}$ and ${\bm W_\omega=\operatorname{diag}(\omega_0,\ldots,\omega_{N-1})}$. If the receiver applies the standard AFDM demodulator to the windowed samples, the noiseless output is
\[
{\widetilde{\bm y}=\bm A_{\mathrm{IDAFT}}^H\bm W_\omega\bm A_{\mathrm{IDAFT}}\bm x.}
\]
For $\mathbb{E}[\bm x\bm x^H]=\bm I$, the resulting sample distortion induced by the window is
\begin{equation}
\mathrm{EVM}_{\mathrm{win}}^2
=\frac{1}{N}\mathbb{E}\left[\left\|\widetilde{\bm y}-\bm x\right\|^2\right]
=\frac{1}{N}\sum_{n=0}^{N-1}|{\omega_n}-1|^2.
\label{eq:window_evm}
\end{equation}
{This distortion metric is used only for the windowing tradeoff and is separate from the LMMSE receiver EVM.}
Therefore, ordinary edge windowing can reduce OOBE, but the reduction is obtained by changing endpoint behavior and, in general, by modifying the sampled AFDM block. If the window is constrained by ${\omega(n/B)=1}$ for all $n$, then the AFDM samples are preserved, but the window cannot generally eliminate the internal PC-AFDM jumps because their locations depend on $m$ and the wrapping boundary index. SFDM instead removes the internal jump term through the continuous-time waveform construction while preserving the discrete AFDM block.
\section{Numerical Results}
\label{sec:numerical_results}

This section validates the spectral and receiver consequences of the continuous-time realization. The waveform illustration has already been given in Section~\ref{subsec:sfdm_realization}. The simulations here focus on performance metrics and postprocessing baselines. We first verify the high-frequency spectral tail coefficient in Theorem~\ref{thm:far_out_tail}. We then evaluate the ESD and the full OOBE over the nominal out-of-band region. Next, we examine the receiver sensitivity caused by fractional delay mismatch. Finally, we use edge windowing as a postprocessing baseline to quantify the OOBE and distortion tradeoff.

A normalized bandwidth $B=1$ is used throughout the simulations. This normalization does not affect the relative spectral or receiver comparisons between PC-AFDM and SFDM. Unless otherwise specified, we set $N=64$ and $c_2=0$. The normalized chirp rate parameter is $\alpha=c_1N$. The average ESD and OOBE are evaluated according to Section~\ref{sec:discontinuity_spectral_leakage}. For FFT based spectral evaluation, a zero padded FFT is used on an oversampled continuous-time grid, and spectral integrals are approximated by numerical integration over the stated frequency ranges. The same numerical integration settings are used for PC-AFDM and SFDM in each comparison. For the receiver experiments, the continuous-time channel is sampled after propagation as described in Section~\ref{sec:sampling_postprocessing}. The purpose is to quantify how different trajectories between sampling instants affect spectrum containment and receiver robustness.

\begin{figure}[!t]
    \centering
    \includegraphics[width=0.98\linewidth]{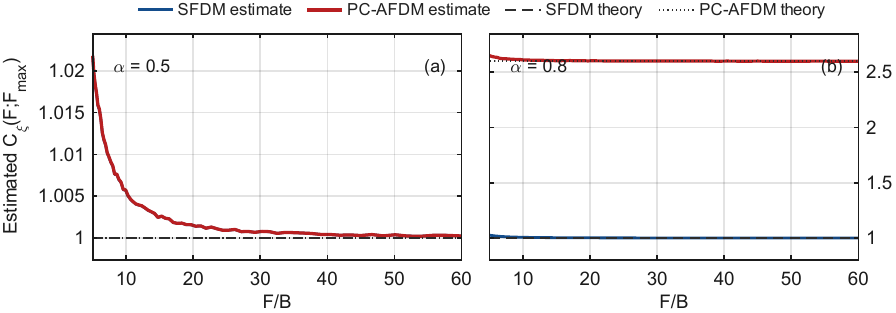}
    \caption{Verification of the high-frequency tail coefficient. The normalized finite interval coefficient $\widehat C_{\xi}(F;F_{\max})$ is plotted against $F/B$. At the special chirp rate $\alpha=0.5$, PC-AFDM and SFDM converge to the same normalized endpoint contribution because the internal jump contribution vanishes. At the generic chirp rate $\alpha=0.8$, PC-AFDM converges to a larger coefficient due to nonzero internal jumps, while SFDM remains at the common normalized endpoint contribution.}
    \label{fig:tail_convergence}
\end{figure}

\begin{figure}[!t]
    \centering
    \includegraphics[width=0.98\linewidth]{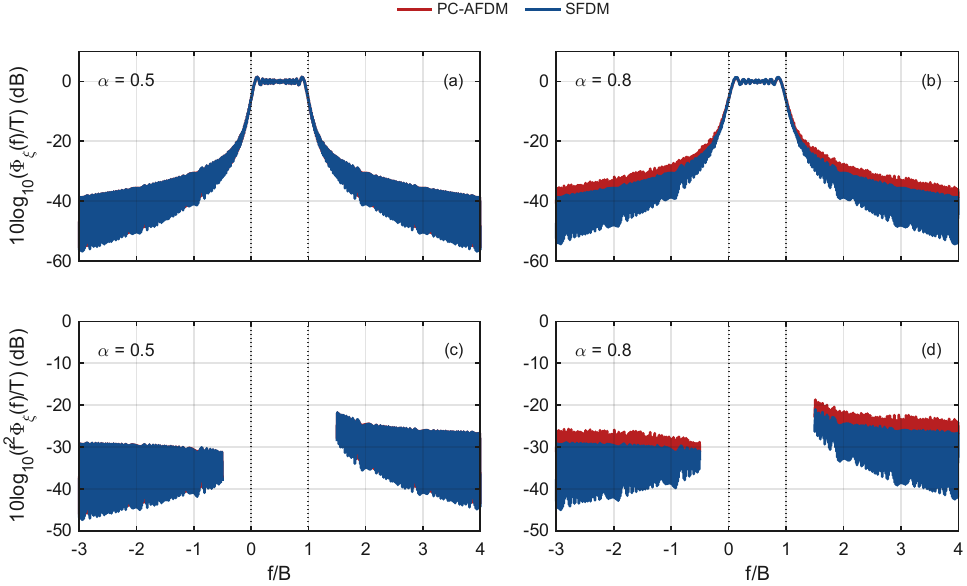}
    \caption{ESD comparison between PC-AFDM and SFDM. Top row: normalized ESD over $f/B\in[-3,4]$. Bottom row: $f^2$ compensated ESD over $f/B\in[-3,-0.5]\cup[1.5,4]$. The spectra are almost identical at the special chirp rate $\alpha=0.5$, while PC-AFDM has a larger compensated ESD at the generic chirp rate $\alpha=0.8$.}
    \label{fig:spec_density_representative}
\end{figure}

\begin{figure}[!t]
    \centering
    \includegraphics[width=0.75\linewidth]{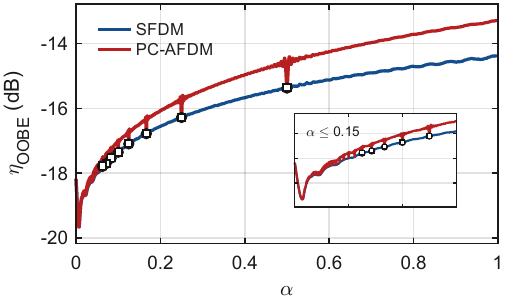}
    \caption{Average OOBE ratio $\eta_{\mathrm{OOBE}}$ versus the normalized chirp rate parameter $\alpha$. SFDM reduces the OOBE of PC-AFDM over most generic chirp rate values. Around the special continuity points $\alpha=1/(2k)$, the gap is reduced because the PC-AFDM phase resets do not produce nonzero envelope jumps.}
    \label{fig:oobe_alpha}
\end{figure}

\begin{figure*}[!t]
    \centering
    \includegraphics[width=0.7\textwidth]{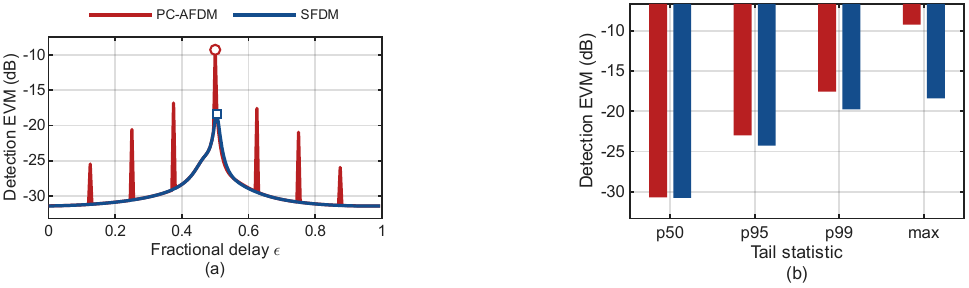}
    \caption{LMMSE receiver EVM under single-path fractional delay mismatch. The true path delay is $(d+\epsilon)/B$ with $d=4$, while the receiver uses a mismatched delay $(d+\epsilon+\Delta\epsilon)/B$ with $\Delta\epsilon=0.005$. (a) Receiver EVM versus the fractional delay $\epsilon$. (b) Tail statistics over all evaluated $\epsilon$ values. SFDM reduces the sharp EVM peaks caused by the discontinuous trajectory between sampling instants of PC-AFDM. The main gain is observed in the high percentile and worst case EVM rather than in an unconditional average improvement.}
    \label{fig:lmmse_detection_evm}
\end{figure*}

\begin{figure}[!t]
    \centering
    \includegraphics[width=0.6\linewidth]{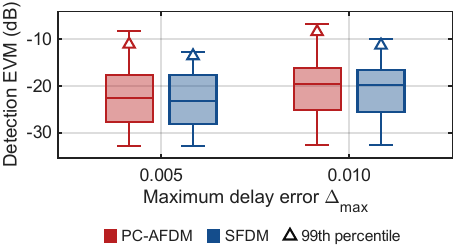}
    \caption{Receiver EVM distribution under random three-path channels with fractional delay estimation errors. The box plots summarize the EVM over random channel realizations, and the triangle markers indicate the 99th percentile. SFDM mainly reduces the high EVM tail, which is the relevant regime when fractional delay uncertainty places PC-AFDM samples near internal wrapping discontinuities.}
    \label{fig:multipath_evm_boxplot}
\end{figure}

\begin{figure*}[!t]
    \centering
    \includegraphics[width=0.78\textwidth]{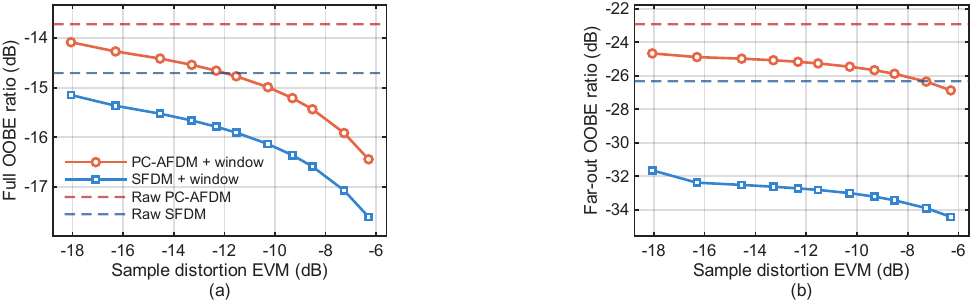}
    \caption{OOBE and sample distortion tradeoff for edge windowed PC-AFDM and SFDM at $\alpha=0.8$. (a) Full OOBE over the nominal out-of-band region. (b) Far-out OOBE. Edge windowing can reduce OOBE for both realizations, but the reduction is obtained by perturbing the AFDM samples. Raw SFDM already has a lower OOBE level than raw PC-AFDM, and windowing provides an additional postprocessing option at the cost of sample distortion rather than a replacement for the proposed continuous-time realization.}
    \label{fig:window_tradeoff}
\end{figure*}

\subsection{Spectral Tail and OOBE}
\label{subsec:oobe_performance}

We first verify the high-frequency spectral tail coefficient in Theorem~\ref{thm:far_out_tail}. Since the theorem characterizes the high-frequency tail rather than the full OOBE integral, we evaluate the normalized tail coefficient over a finite frequency interval,
\begin{equation*}
\widehat C_{\xi}(F;F_{\max})
=
\frac{\pi^2\int_{F<|f|<F_{\max}}\Phi^{(\xi)}(f)\,df}
{1/F-1/F_{\max}}.
\label{eq:finite_band_tail_coefficient}
\end{equation*}
According to Theorem~\ref{thm:far_out_tail}, $\widehat C_{\xi}(F;F_{\max})$ approaches {the normalized coefficient multiplying $1/(\pi^2F)$ in the two sided high-frequency tail} as $F$ increases. We evaluate $\alpha=0.5$ and $\alpha=0.8$, which represent a special continuity point and a generic chirp rate, respectively.

Fig.~\ref{fig:tail_convergence} confirms the tail coefficient analysis. For $\alpha=0.5$, PC-AFDM satisfies the continuity condition in Theorem~\ref{thm:pc_continuity_criterion}. Hence, its internal jump contribution vanishes and both realizations converge to the same normalized coefficient. For $\alpha=0.8$, SFDM still converges to the normalized endpoint contribution, whereas PC-AFDM converges to a larger normalized value caused by the internal envelope jumps. This verifies that the additional high-frequency tail of PC-AFDM is not merely a finite block endpoint effect.

We next examine the full spectral behavior. The high-frequency coefficient isolates the asymptotic tail, while the OOBE metric in \eqref{eq:oobe_ratio_def} integrates the complete nominal out-of-band region and therefore includes both near-band leakage and high-frequency spectral tails.

Fig.~\ref{fig:spec_density_representative} shows representative ESD plots. At $\alpha=0.5$, the spectra of PC-AFDM and SFDM are nearly identical, which is consistent with the disappearance of internal jumps. At $\alpha=0.8$, PC-AFDM has stronger high-frequency leakage, and this difference becomes clearer in the $f^2$ compensated plots. The compensated spectra therefore provide a direct visual check of the $1/f^2$ tail coefficient derived in Theorem~\ref{thm:far_out_tail}. The full OOBE behavior is evaluated separately below, because the full OOBE also includes near-band leakage.

Fig.~\ref{fig:oobe_alpha} shows the normalized OOBE ratio as a function of $\alpha$. SFDM yields lower OOBE than PC-AFDM over most of the evaluated range. In the small $\alpha$ regime, the two curves are close because internal wrapping occurs only for a limited number of high-index subcarriers. As $\alpha$ increases, internal wrapping events become more frequent and more widely distributed across the block, which creates additional spectral leakage in PC-AFDM. Around the special points $\alpha=1/(2k)$, the gap is reduced because the corresponding phase resets are integer multiples of $2\pi$. These results agree with the continuity condition in Theorem~\ref{thm:pc_continuity_criterion}.

\subsection{Single-Path Fractional Delay Mismatch}
\label{subsec:lmmse_detection_evm}

The preceding spectral results validate the OOBE advantage of SFDM. We next examine whether the different continuous-time realizations also affect receiver robustness. This question is relevant because PC-AFDM and SFDM are equivalent only at the transmit sampling instants. After a continuous-time fractional delay channel, the receiver samples delayed waveform values that generally lie between the original sampling instants.

We consider a single-path channel with gain $h=1$, zero Doppler shift, and delay $\tau=(d+\epsilon)/B$, where $d=4$ and $\epsilon\in[0,1)$. The receiver forms a mismatched parametric sampled channel matrix using $\epsilon+\Delta\epsilon$, while the true received signal is generated using $\epsilon$. Unless otherwise specified, $\Delta\epsilon=0.005$ and the SNR is $35$ dB. The sampled noise variance is normalized with respect to unit average received signal power, and the noise variance used in the LMMSE receiver is $\sigma_w^2=10^{-\mathrm{SNR}/10}$. For each realization $\xi\in\{\mathrm{pc},\mathrm{step}\}$, the noiseless sampled input-output relation is $\bar{\bm r}^{(\xi)}=\bm H_{\xi}(\epsilon)\bm x$. The LMMSE receiver is built from the mismatched matrix $\widehat{\bm H}_{\xi}=\bm H_{\xi}(\epsilon+\Delta\epsilon)$ as
\[
\bm W_{\xi}
=
\left(
\widehat{\bm H}_{\xi}^{H}\widehat{\bm H}_{\xi}
+\sigma_w^2\bm I
\right)^{-1}
\widehat{\bm H}_{\xi}^{H}.
\]
The corresponding receiver EVM is computed analytically as
\begin{equation}
\mathrm{EVM}_{\xi}^{2}
=
\frac{1}{N}
\left\|
\bm W_{\xi}\bm H_{\xi}(\epsilon)-\bm I
\right\|_{\mathrm F}^{2}
+
\frac{\sigma_w^2}{N}
\left\|
\bm W_{\xi}
\right\|_{\mathrm F}^{2}.
\label{eq:sim_detection_evm}
\end{equation}

Fig.~\ref{fig:lmmse_detection_evm} shows the resulting receiver EVM. PC-AFDM exhibits sharp EVM peaks at certain fractional delay values. These peaks occur because a small delay perturbation can move the receiver sampling point across a wrapping-induced envelope jump. SFDM varies more smoothly because its phase is accumulated continuously across sampling intervals. The distribution summary confirms that the difference is most pronounced in the high percentile and worst case EVM. Therefore, the receiver advantage of SFDM should be interpreted as reduced sensitivity to fractional delay uncertainty, not as a universal average EVM gain under all possible channel realizations.

\subsection{Random Multipath Fractional Delay Validation}
\label{subsec:random_multipath_validation}

The single-path experiment isolates the discontinuity mechanism. To test whether the effect persists in a less controlled channel, we further consider random three-path channels. For the random multipath experiment in Fig.~\ref{fig:multipath_evm_boxplot}, we use $L=3$ and fix $\alpha=0.8$. The delay of the $\ell$-th path is $\tau_\ell=(d_\ell+\epsilon_\ell)/B$, where $d_\ell$ is uniformly drawn without replacement from $\{1,2,\ldots,7\}$ and $\epsilon_\ell\sim\mathcal U(0,1)$. The normalized Doppler shift is $\widetilde{\nu}_\ell=\nu_\ell/B\sim\mathcal U[-0.03,0.03]$. Thus, the sampled Doppler phase is $e^{j2\pi\widetilde{\nu}_\ell n}=e^{j2\pi\nu_\ell t_n}$ with $t_n=n/B$. Equivalently, the block normalized Doppler $\kappa_\ell=\nu_\ell T$ satisfies $\kappa_\ell\sim\mathcal U[-0.03N,0.03N]$.

The delay estimation error $\delta_\ell$ is measured in seconds, and its normalized value satisfies $B\delta_\ell\sim\mathcal U[-\Delta_{\max},\Delta_{\max}]$, where $\Delta_{\max}\in\{0.005,0.01\}$. The receiver forms the mismatched parametric channel matrix using $\widehat{\tau}_\ell=\tau_\ell+\delta_\ell$ while keeping the path gains and Doppler shifts unchanged.

The unnormalized complex path gains are generated as independent circularly symmetric complex Gaussian variables $\widetilde h_\ell\sim\mathcal{CN}(0,1)$ and are then normalized as $h_\ell=\widetilde h_\ell/\sqrt{\sum_{i=1}^{L}|\widetilde h_i|^2}$, so that each channel realization satisfies $\sum_{\ell=1}^{L}|h_\ell|^2=1$.
Unless otherwise stated, we set $\mathrm{SNR}=35$ dB and use $\sigma_w^2=10^{-\mathrm{SNR}/10}$ in the LMMSE EVM calculation, with the same unit power normalization as in Section~\ref{subsec:lmmse_detection_evm}. The same channel realizations are used for PC-AFDM and SFDM to ensure a paired comparison. For each random channel realization, the LMMSE receiver EVM is computed using \eqref{eq:sim_detection_evm}.

Fig.~\ref{fig:multipath_evm_boxplot} shows the EVM distributions for two values of $\Delta_{\max}$. The median EVMs of the two realizations can be close, which is expected because PC-AFDM and SFDM remain equivalent at the sampling instants before propagation and differ only through delayed values between sampling instants. The more important difference appears in the upper tail. PC-AFDM produces larger high percentile EVM values in the random multipath ensemble, while SFDM gives a lower and more stable upper tail. This result supports the receiver interpretation of SFDM: it reduces the risk of severe mismatch-induced receiver degradation in fractional delay channels.

\subsection{OOBE and Sample Distortion Tradeoff}
\label{subsec:windowing_results}

A natural postprocessing baseline is to apply windowing to the transmitted block. Windowing can reduce spectral leakage, but it also changes the waveform samples observed by a standard AFDM receiver. This is different from SFDM, which changes the trajectory between sampling instants while preserving the IDAFT samples exactly. We therefore use edge windowing as a reference baseline for quantifying the OOBE and sample distortion tradeoff.

For the edge windowing baseline, a raised cosine taper is applied near the two block boundaries. For $\rho=0$, we set ${\omega_0(t)=1}$. For $\rho>0$, the window over one AFDM block is
\[
{
\omega_{\rho}(t)=
\begin{cases}
\dfrac{1}{2}\left[1-\cos\left(\dfrac{\pi t}{\rho/B}\right)\right],
& 0\le t<\rho/B,\\[1.0ex]
1,
& \rho/B\le t<T-\rho/B,\\[1.0ex]
\dfrac{1}{2}\left[1-\cos\left(\dfrac{\pi (T-t)}{\rho/B}\right)\right],
& T-\rho/B\le t<T.
\end{cases}
}
\]
Here $\rho$ is the edge length measured in Nyquist sampling intervals, and $\rho/B$ is the corresponding duration in seconds. The windowed PC-AFDM waveform is ${s_{\rho}^{(\mathrm{pc})}(t)=\omega_{\rho}(t)s^{(\mathrm{pc})}(t)}$ for $0\le t<T$. The same window is also applied to SFDM when plotting the SFDM windowing curve. Since ${\omega_{\rho}(t)}$ modifies the AFDM samples, we quantify the sample distortion induced by windowing by
\[
\mathrm{EVM}_{\mathrm{win}}^2(\rho)
=
\frac{1}{N}
\sum_{n=0}^{N-1}
\left|{\omega_{\rho}}\left(\frac{n}{B}\right)-1\right|^2,
\]
which is the same distortion measure as \eqref{eq:window_evm} for the raised cosine window. This metric measures only the sample distortion introduced by edge windowing, rather than the receiver output error. In Fig.~\ref{fig:window_tradeoff}, the horizontal axis for the windowing baseline is plotted as $10\log_{10}\mathrm{EVM}_{\mathrm{win}}^2(\rho)$ in dB, equivalently $20\log_{10}\mathrm{EVM}_{\mathrm{win}}(\rho)$. The sweep range is $\rho\in\{0,1,2,3,4,5,6,8,10,12,16,20\}$, where $\rho=0$ corresponds to the unwindowed waveform. The full OOBE is evaluated over $\mathbb{R}\setminus[0,B)$, while the far-out OOBE is evaluated over $f/B\in(-\infty,-0.5)\cup(1.5,\infty)$. The far-out metric is included because the internal jumps of PC-AFDM mainly affect the high-frequency spectral tail.

Fig.~\ref{fig:window_tradeoff} shows the windowing tradeoff. The raw SFDM reference has lower OOBE than raw PC-AFDM at the generic chirp rate. Applying an edge window reduces OOBE for both realizations, but moving along the tradeoff curve requires increasing sample distortion. This confirms that windowing is not a sample-preserving substitute for removing the internal jumps during waveform construction. It remains a useful engineering tool, but its benefit is obtained by perturbing the transmitted samples, whereas SFDM eliminates the internal jump source before any postprocessing is applied.

Overall, the numerical results support three conclusions. First, SFDM reduces the OOBE of PC-AFDM at generic chirp rates because it eliminates internal envelope jumps while preserving the IDAFT samples. Second, under fractional delay uncertainty, the discontinuous trajectory between sampling instants of PC-AFDM can produce large high percentile LMMSE receiver EVM, whereas SFDM gives smoother receiver behavior. Third, windowing can reduce OOBE but introduces a tradeoff with sample distortion, which distinguishes postprocessing from the proposed continuous-time waveform construction.
\section{Conclusion}
\label{sec:conclusion}

The fundamental lesson of this work is that the continuous-time realization of an AFDM block is a first-class design parameter, not an implementation afterthought. The conventional PC-AFDM inherits internal complex envelope jumps from its frequency-wrapping mechanism. These jumps are invisible to the Nyquist-rate samples but dominate the out-of-band spectral tail and drastically degrade receiver performance under fractional-delay propagation. Until now, this structural flaw has been the hidden source of unexplained high OOBE.

SFDM eliminates these jumps without altering the IDAFT output sequence. By holding a constant midpoint frequency inside each sampling interval and accumulating phase continuously, SFDM delivers a globally continuous complex envelope while being mathematically identical to PC-AFDM at every AFDM sampling point. 

Looking forward, the stepped-frequency philosophy extends beyond AFDM. 
Any modulation transform whose discrete symbols are converted to a physical waveform leaves the inter-sample trajectory free. This freedom can be systematically exploited to improve spectral containment, sensing performance, or hardware tolerance.

\bibliographystyle{IEEEtran}
\bibliography{refs}

@INPROCEEDINGS{Bemani2021AFDM,
  author    = {Bemani, Ali and Ksairi, Nassar and Kountouris, Marios},
  booktitle = {{IEEE} ICC Workshops},
  title     = {{AFDM}: A Full Diversity Next Generation Waveform for High Mobility Communications},
  year      = {2021},
  doi       = {10.1109/ICCWorkshops50388.2021.9473655}
}

@ARTICLE{Bemani2023AFDM,
  author  = {Bemani, Ali and Ksairi, Nassar and Kountouris, Marios},
  journal = {IEEE Trans. Wireless Commun.},
  title   = {Affine Frequency Division Multiplexing for Next Generation Wireless Communications},
  year    = {2023},
  volume  = {22},
  number  = {11},
  pages   = {8214--8229},
  doi     = {10.1109/TWC.2023.3260906}
}

@ARTICLE{Bemani2024ISAC,
  author  = {Bemani, Ali and Ksairi, Nassar and Kountouris, Marios},
  journal = {IEEE Wireless Commun. Lett.},
  title   = {Integrated Sensing and Communications With Affine Frequency Division Multiplexing},
  year    = {2024},
  volume  = {13},
  number  = {5},
  pages   = {1255--1259},
  doi     = {10.1109/LWC.2024.3367178}
}

@ARTICLE{Ranasinghe2025AFDMISAC,
  author={Ranasinghe, Kuranage Roche Rayan and Seok Rou, Hyeon and Thadeu Freitas de Abreu, Giuseppe and Takahashi, Takumi and Ito, Kenta},
  journal = {IEEE Trans. Wireless Commun.},
  title   = {Joint Channel, Data, and Radar Parameter Estimation for {AFDM} Systems in Doubly-Dispersive Channels},
  year    = {2025},
  volume  = {24},
  number  = {2},
  pages   = {1602--1619},
  doi     = {10.1109/TWC.2024.3510935}
}

@ARTICLE{Wu2011OversampledOFDM,
  author  = {Wu, Jingxian and Zheng, Yahong Rosa},
  journal = {IEEE Trans. Commun.},
  title   = {Oversampled Orthogonal Frequency Division Multiplexing in Doubly Selective Fading Channels},
  year    = {2011},
  volume  = {59},
  number  = {3},
  pages   = {815--822},
  doi     = {10.1109/TCOMM.2011.121410.090655}
}

@ARTICLE{10769778,
  author  = {Rou, Hyeon Seok and de Abreu, Giuseppe Thadeu Freitas and Choi, Junil and Gonz{\'a}lez G., David and Kountouris, Marios and Guan, Yong Liang and Gonsa, Osvaldo},
  journal = {IEEE Signal Process. Mag.},
  title   = {From Orthogonal Time-Frequency Space to Affine Frequency-Division Multiplexing: A Comparative Study of Next-Generation Waveforms for Integrated Sensing and Communications in Doubly Dispersive Channels},
  year    = {2024},
  volume  = {41},
  number  = {5},
  pages   = {71--86},
  doi     = {10.1109/MSP.2024.3422653}
}

@ARTICLE{cao2025agile,
  author  = {Cao, Yewen and Shao, Yulin},
  title   = {Agile Affine Frequency Division Multiplexing},
  journal = {arXiv preprint arXiv:2512.14424},
  year    = {2025}
}

@ARTICLE{11173628,
  author  = {Yin, Haoran and Tang, Yanqun and Ni, Yuanhan and Wang, Zulin and Chen, Gaojie and Xiong, Jun and Yang, Kai and Kountouris, Marios and Guan, Yong Liang and Zeng, Yong},
  journal = {IEEE J. Sel. Areas Commun.},
  title   = {Ambiguity Function Analysis of {AFDM} Signals for Integrated Sensing and Communications},
  year    = {2026},
  volume  = {44},
  pages   = {196--211},
  doi     = {10.1109/JSAC.2025.3611936}
}

@ARTICLE{Brandes2006CancellationCarriers,
  author={Brandes, S. and Cosovic, I. and Schnell, M.},
  journal = {IEEE Commun. Lett.},
  title   = {Reduction of Out-of-Band Radiation in {OFDM} Systems by Insertion of Cancellation Carriers},
  year    = {2006},
  volume  = {10},
  number  = {6},
  pages   = {420--422},
  doi     = {10.1109/LCOMM.2006.1638602}
}

@ARTICLE{vanDeBeek2009NContinuousOFDM,
  author  = {van de Beek, Jaap and Berggren, Fredrik},
  journal = {IEEE Commun. Lett.},
  title   = {{N}-Continuous {OFDM}},
  year    = {2009},
  volume  = {13},
  number  = {1},
  pages   = {1--3},
  doi     = {10.1109/LCOMM.2009.081446}
}

@INPROCEEDINGS{Hadani2017OTFS,
  author={Hadani, R. and Rakib, S. and Tsatsanis, M. and Monk, A. and Goldsmith, A. J. and Molisch, A. F. and Calderbank, R.},
  booktitle = {{IEEE} WCNC},
  title     = {Orthogonal Time Frequency Space Modulation},
  year      = {2017},
  doi       = {10.1109/WCNC.2017.7925924}
}

@ARTICLE{Raviteja2018OTFS,
  author={Raviteja, P. and Phan, Khoa T. and Hong, Yi and Viterbo, Emanuele},
  journal = {IEEE Trans. Wireless Commun.},
  title   = {Interference Cancellation and Iterative Detection for Orthogonal Time Frequency Space Modulation},
  year    = {2018},
  volume  = {17},
  number  = {10},
  pages   = {6501--6515},
  doi     = {10.1109/TWC.2018.2860011}
}

@ARTICLE{Ouyang2016OCDM,
  author  = {Ouyang, Xing and Zhao, Jian},
  journal = {IEEE Trans. Commun.},
  title   = {Orthogonal Chirp Division Multiplexing},
  year    = {2016},
  volume  = {64},
  number  = {9},
  pages   = {3946--3957},
  doi     = {10.1109/TCOMM.2016.2594792}
}

@ARTICLE{Ebihara2016OSDM,
  author  = {Ebihara, Tadashi and Leus, Geert},
  journal = {IEEE J. Ocean. Eng.},
  title   = {Doppler-Resilient Orthogonal Signal-Division Multiplexing for Underwater Acoustic Communication},
  year    = {2016},
  volume  = {41},
  number  = {2},
  pages   = {408--427},
  doi     = {10.1109/JOE.2015.2454411}
}

@ARTICLE{Mahmoud2008AST,
  author  = {Mahmoud, Hisham A. and Arslan, H{\"u}seyin},
  journal = {IEEE Commun. Lett.},
  title   = {Sidelobes Suppression in {OFDM}-Based Spectrum Sharing Systems Using Adaptive Symbol Transition},
  year    = {2008},
  volume  = {12},
  number  = {2},
  pages   = {133--135},
  doi     = {10.1109/LCOMM.2008.071729}
}

@ARTICLE{You2014OOBE,
  author  = {You, Zihao and Fang, Juan and Lu, I-Tai},
  journal = {EURASIP J. Adv. Signal Process.},
  title   = {Out-of-Band Emission Suppression Techniques Based on a Generalized {OFDM} Framework},
  year    = {2014},
  volume  = {2014},
  number  = {1},
  pages   = {74},
  doi     = {10.1186/1687-6180-2014-74}
}

@ARTICLE{11489295,
  author  = {Li, Xiangjun and Liu, Zilong and Zhou, Zhengchun and Fan, Pingzhi},
  journal = {IEEE Trans. Wireless Commun.},
  title   = {Matched Filtering-Based Channel Estimation for {AFDM} Systems in Doubly Selective Channels},
  year    = {2026},
  volume  = {25},
  pages   = {15597--15610},
  doi     = {10.1109/TWC.2026.3683823}
}

@INCOLLECTION{Matz2011TimeVaryingChannels,
  author    = {Matz, Gerald and Hlawatsch, Franz},
  title     = {Fundamentals of Time-Varying Communication Channels},
  booktitle = {Wireless Communications over Rapidly Time-Varying Channels},
  editor    = {Hlawatsch, Franz and Matz, Gerald},
  publisher = {Academic Press},
  address   = {Oxford, U.K.},
  year      = {2011},
  pages     = {1--63},
  doi       = {10.1016/B978-0-12-374483-8.00001-7}
}

@ARTICLE{Laakso1996FractionalDelay,
  author={Laakso, T.I. and Valimaki, V. and Karjalainen, M. and Laine, U.K.},
  journal = {IEEE Signal Process. Mag.},
  title={Splitting the unit delay: {FIR}/all pass filters design}, 
  year    = {1996},
  volume  = {13},
  number  = {1},
  pages   = {30--60},
  doi     = {10.1109/79.482137}
}

@ARTICLE{221081,
  author={Gardner, F.M.},
  journal={IEEE Trans. Commun.},
  title={Interpolation in digital modems. I. Fundamentals}, 
  year={1993},
  volume={41},
  number={3},
  pages={501-507},
  doi={10.1109/26.221081}}

@article{liu2004orthogonal,
  title={Orthogonal time-frequency signaling over doubly dispersive channels},
  author={Liu, Ke and Kadous, Tamer and Sayeed, Akbar M},
  journal={IEEE Trans. Inf. Theory},
  volume={50},
  number={11},
  pages={2583--2603},
  year={2004},
  publisher={IEEE}
}

@article{shao2021federated,
  title={Federated edge learning with misaligned over-the-air computation},
  author={Shao, Yulin and G{\"u}nd{\"u}z, Deniz and Liew, Soung Chang},
  journal={IEEE Trans. Wireless Commun.},
  volume={21},
  number={6},
  pages={3951--3964},
  year={2021},
  publisher={IEEE}
}

@article{yang2026delay,
  title={Delay-Doppler Domain Channel Estimation: What if Sparsity is Unknown?},
  author={Yang, Zijian and Shao, Yulin and Hou, Fen and Ma, Shaodan},
  journal={arXiv:2605.00049},
  year={2026}
}

@article{diez2019generalized,
  title={A generalized spectral shaping method for {OFDM} signals},
  author={Diez, Luis and Cort{\'e}s, Jos{\'e} A and Ca{\~n}ete, Francisco Javier and Martos-Naya, Eduardo and Iranzo, Salvador},
  journal={IEEE Trans. Commun.},
  volume={67},
  number={5},
  pages={3540--3551},
  year={2019},
  publisher={IEEE}
}

@article{shao2024theory,
  title={A theory of semantic communication},
  author={Shao, Yulin and Cao, Qi and G{\"u}nd{\"u}z, Deniz},
  journal={IEEE Trans. Mobile Comp.},
  volume={23},
  number={12},
  pages={12211--12228},
  year={2024},
  publisher={IEEE}
}

@book{bracewell1986fourier,
  title={The {F}ourier transform and its applications},
  author={Bracewell, Ronald Newbold and Bracewell, Ronald N},
  volume={31999},
  year={1986},
  publisher={McGraw-hill New York}
}

@book{oppenheim1997signals,
  title={Signals \& systems},
  author={Oppenheim, Alan V and Willsky, Alan S and Nawab, Syed Hamid},
  year={1997},
  publisher={Pearson Educaci{\'o}n}
}

\end{document}